\renewcommand\nomgroup[1]{%
  \item[\bfseries
  \ifstrequal{#1}{P}{A. Parameters}{%
  \ifstrequal{#1}{V}{C. Variables}{%
  \ifstrequal{#1}{S}{B. Sets and Indices}{}}}%
]}
\begin{document}
\bstctlcite{IEEEexample:BSTcontrol}

\title{Towards Optimal System Scheduling with Synthetic Inertia Provision from Wind Turbines}

\newtheorem{proposition}{Proposition}
\renewcommand{\theenumi}{\alph{enumi}}

\newcommand{\uros}[1]{\textcolor{magenta}{$\xrightarrow[]{\text{Uros}}$ #1}}

\author{Zhongda~Chu,~\IEEEmembership{Student~Member,~IEEE,}
        Uros~Markovic,~\IEEEmembership{Member,~IEEE,}
        Gabriela~Hug,~\IEEEmembership{Senior~Member,~IEEE,} 
        Fei~Teng,~\IEEEmembership{Member,~IEEE} 
        \thanks{Z. Chu and F. Teng (e-mail: f.teng@imperial.ac.uk) are with Imperial College London; 
        U. Markovic and G. Hug are with ETH Zurich.
        
        This work was supported by EPSRC under Grant EP/T021780/1 and by ESRC under Grant ES/T000112/1.}
        % <-this % stops a space
    }
\maketitle
\IEEEpeerreviewmaketitle

%ABSTRACT
\begin{abstract}
The undergoing transition from conventional to converter-interfaced renewable generation leads to significant challenges in maintaining frequency stability due to declining system inertia. In this paper, a novel control framework for Synthetic Inertia (SI) provision from Wind Turbines (WTs) is proposed, which eliminates the secondary frequency dip and allows the dynamics of SI from WTs to be analytically integrated into the system frequency dynamics. Furthermore, analytical system frequency constraints with SI provision from WTs are developed and incorporated into a stochastic system scheduling model, which enables the provision of SI from WTs to be dynamically optimized on a system level. Several case studies are carried out on a Great Britain 2030 power system with different penetration levels of wind generation and inclusion of frequency response requirements in order to assess the performance of the proposed model and analyze the influence of the improved SI control scheme on the potential secondary frequency dip. The results demonstrate that the inclusion of SI provision from WTs into Unit Commitment (UC) can drastically impact the overall system costs.
\end{abstract}

%INDEX TERMS
\begin{IEEEkeywords}
synthetic inertia, frequency response, unit commitment, recovery effect, secondary frequency dip
\end{IEEEkeywords}

\makenomenclature
\mbox{}
\nomenclature[P]{$J$}{Lumped inertia of WT driven systems$\,[\mathrm{Mkgm}^2]$}
\nomenclature[P]{$\rho$}{Air density$\,[\mathrm{kg/m}^3]$}
\nomenclature[P]{$R$}{WT radius$\,[\mathrm{m}]$}
\nomenclature[P]{$v_w$}{Wind speed$\,[\mathrm{m/s}]$}
\nomenclature[P]{$T_d$}{Fully delivered time of PFR$\,\mathrm{[s]}$}
\nomenclature[P]{$\Delta \dot f_\mathrm{lim}$}{Maximum permissible RoCoF$\,[\mathrm{Hz/s}]$}
\nomenclature[P]{$\Delta f_\mathrm{lim}$}{Maximum permissible frequency deviation$\,[\mathrm{Hz}]$}
\nomenclature[P]{$\omega_{r,\mathrm{min}}$}{WT minimum allowable rotor speed$\,[\mathrm{rad/s}]$}
\nomenclature[P]{$H_g$}{Inertia time constant of unit $g\,\mathrm{[s]}$}
\nomenclature[P]{$P_g^\mathrm{max}$}{Installed capacity of unit $g\,\mathrm{[MW]}$}
\nomenclature[P]{$f_0$}{Nominal frequency$\,\mathrm{[Hz]}$}
\nomenclature[P]{$t_n$}{Time instance of frequency nadir$\,\mathrm{[s]}$}
\nomenclature[P]{$\pi(n)$}{Probability of scenario $n$}
\nomenclature[P]{$C_g(n)$}{Operating cost associated with unit $g$ of scenario n}
\nomenclature[P]{$P_{wj}^{c}$}{Capacity of the $j$-th WF$\,\mathrm{[MW]}$}
\nomenclature[P]{$\lambda$}{WT tip ratio}
\nomenclature[P]{$\kappa_j(v_w)$}{Wind speed distribution in the $j$-th WF}
\nomenclature[P]{$\omega_{r,0}$}{Initial WT rotor speed before the disturbance$\,[\mathrm{rad/s}]$}
\nomenclature[P]{$\theta$}{WT pitch angle$\,[\mathrm{rad}]$}
\nomenclature[P]{$\Tilde{P}_{a,0}$}{WT pre-disturbance mechanical power$\,[\mathrm{MW}]$}
\nomenclature[P]{$D$}{Load-dependent damping in the system$\,[\mathrm{MW/Hz}]$}
\nomenclature[P]{$\Delta P_L$}{Loss of generation$\,\mathrm{[MW]}$}

\nomenclature[V]{$\omega_r$}{WT rotor speed$\,[\mathrm{rad/s}]$}
\nomenclature[V]{$P_a$}{Aerodynamic power of WTs$\,[\mathrm{MW}]$}
\nomenclature[V]{$P_e$}{Electrical power of WTs$\,[\mathrm{MW}]$}
\nomenclature[V]{$C_p$}{Aerodynamic power coefficient}
\nomenclature[V]{$\Delta \tilde{P}_a$}{Output power of MPE$\,[\mathrm{MW}]$}
\nomenclature[V]{$\Delta {P}_{opt}$}{WT MPPT setpoint variation$\,[\mathrm{MW}]$}
\nomenclature[V]{$P_{SI}$}{WT SI control output power$\,[\mathrm{MW}]$}
\nomenclature[V]{$\Delta {P}_e^*$}{WT electrical power reference variation$\,[\mathrm{MW}]$}
\nomenclature[V]{$\Delta {P}_e$}{WT electrical power variation$\,[\mathrm{MW}]$}
\nomenclature[V]{$H_s$}{WT synthetic inertia$\,[\mathrm{MWs/Hz}]$}
\nomenclature[V]{$\Delta f$}{System frequency deviation$\,[\mathrm{Hz}]$}
\nomenclature[V]{$\Delta \hat P_a$}{Linear approximation of $\Delta \tilde{P}_a$ $\,[\mathrm{MW}]$}
\nomenclature[V]{$D_s$}{Damping coefficient in MPE approximation$\,[\mathrm{MW/Hz}]$}
\nomenclature[V]{$H_{s,\mathrm{max}}$}{Maximum synthetic inertia of WTs$\,[\mathrm{MWs/Hz}]$}
\nomenclature[V]{$P_0$}{WT initial active power output$\,[\mathrm{MW}]$}
\nomenclature[V]{$P_{\mathrm{max}}$}{WT maximum active power output$\,[\mathrm{MW}]$}
\nomenclature[V]{$H_{s_j}^C$}{SI capacity of the $j$-th WF$\,[\mathrm{MWs/Hz}]$}
\nomenclature[V]{$H_{s,\mathrm{max}_j}$}{Maximum available SI of a WT in the $j$-th WF $[\mathrm{MWs/Hz}]$}
\nomenclature[V]{$\Delta \hat P_{a_j}$}{Mechanical power loss of the $j$-th WF$\,[\mathrm{MW}]$}
\nomenclature[V]{$\Delta \hat P_{a_{j}}^0$}{Mechanical power loss in a single WT of the $j$-th WF$\,[\mathrm{MW}]$}
\nomenclature[V]{$H_{s_j}$}{Synthetic inertia of the $j$-th WF$\,[\mathrm{MWs/Hz}]$}
\nomenclature[V]{$H_c$}{Synthetic inertia of conventional synchronous generators $[\mathrm{MWs/Hz}]$}
\nomenclature[V]{$\Delta R(t)$}{Power injection through the Primary Frequency Response (PFR)$\,\mathrm{[MW]}$}
\nomenclature[V]{$N_g^\mathrm{up}$}{Operation status of unit $g$ with $0/1$ indicating unit being offline/online }
\nomenclature[V]{$H$}{Total inertia in the system$\,[\mathrm{MWs/Hz}]$}
\nomenclature[V]{$\Bar{H}_s$}{Total synthetic inertia in the system$\,[\mathrm{MWs/Hz}]$}
\nomenclature[V]{$\Delta \bar{P}_a$}{Total mechanical power loss due to the SI provision $[\mathrm{MW}]$}
\nomenclature[V]{$\Delta P_w$}{Total WT power injection to the system$\,[\mathrm{MW}]$}
\nomenclature[V]{$\pmb{\mathrm{H_s}}$}{SI time constant$\,[\mathrm{s}]$}

\nomenclature[S]{$j\in\mathcal{F}$}{Set of WFs}
\nomenclature[S]{$g\in\mathcal{G}$}{Set of conventional generators}
\nomenclature[S]{$i\in\mathcal{P}$}{Set of planes in piece-wise linearization}
\nomenclature[S]{$n\in \mathcal{N}$}{Set of nodes in the scenario tree}
\printnomenclature

\section{Introduction} \label{sec:1}
Zero carbon operation of the Great Britain (GB) power system is expected to be achieved by 2025 in order to reduce overall greenhouse gas emissions \cite{zero-c}. Most notably, the system operator is accounting for large-scale integration of wind generation \cite{10-year}. With the conventional power plants being replaced by wind turbines, significant challenges are anticipated in terms of system operation and stability \cite{8450880}. One such challenge is driven by the reduction of system inertia, since wind turbines are interfaced to the grid through power electronic converters that decouple the rotational Kinetic Energy (KE) from the system. It is predicted that the total system inertia in the UK will be reduced by up to 70\% by 2033/34\cite{FES,8830447}.

In order to maintain secure and stable system operation, various control strategies have been developed to facilitate the provision of frequency support from Variable Speed Wind Turbines (VSWTs). In general, two control techniques have been proposed to increase the power injection from VSWTs into the grid during frequency events: deloading and overproduction \cite{DREIDY2017144}, with the latter being more popular as it maintains the optimal energy utilization under normal operating conditions \cite{8064700}. During overproduction additional power is injected into the grid from VSWTs through either pitch control at above-rated wind speeds or KE extraction at below-rated wind speeds. For pitch control, the maximum power injection is restricted by the pitch angle adjustment rate and the converter capacity rating. In general, an increase by 0.2~$\mathrm{p.u.}$ above the rated power for 10 seconds is achievable \cite{KARBOUJ2019488}. On the other hand, KE extraction involves rotor deceleration, thus limited by the minimum permissible rotor speed to ensure mechanical stability. Furthermore, the rotor deceleration drives the operating point away from its optimal value set by the Maximum Power Point Tracking (MPPT) control, leading to the decrease of the mechanical power captured from wind. Therefore, an underproduction period is inevitable after the SI provision to cease further rotor deceleration, which is the major limitation of SI provision through KE extraction \cite{ATTYA20182071}.

A number of control schemes have been proposed to mitigate the underproduction and the associated secondary frequency dip. Instead of a step decrease in frequency support, a ramp power reduction is employed at the end of the overproduction period in \cite{HAFIZ2015629,6038914}. Alternatively, an improved strategy is presented in \cite{8064700} by navigating the electrical power output of a WT in an optimal fashion from the mechanical curve to the MPPT trajectory, whereas \cite{6423240} suggests a rotor speed-dependent inertia provision to alleviate the power reduction at the end of a frequency support period. An energy storage system is also considered to compensate the power decrease due to frequency support termination \cite{7017580}. All of the aforementioned approaches reduce to some extent the step-change in WT's power, but they also neglect the mechanical power reduction due to the loss of efficiency which can be significant depending on the rotor speed deviation. \textcolor{black}{Moreover, most of the SI control applications for WTs in the literature focus on the local device-level optimization where the Phase-Locked Loop (PLL) is usually applied to measure the system frequency and its derivative for the purposes of SI control. However, the PLL introduces additional control loops and measurement delay. The authors in \cite{7797125} point out that the PLL dynamics would increase the oscillations and settling time of the frequency deviation. On the other hand, the derivative control used for obtaining RoCoF also makes it sensitive to noise and can lead to unstable operation~\cite{app7070654,8611073}. To address the potential instability of PLLs and avoid the high frequency noises introduced by differential operators, \cite{8468053} proposes a noise-free estimation of the frequency derivative using frequency-locked loops. Regardless of the potential drawbacks related to the derivative term, the optimal SI control can still be achieved on the device-level. However, the optimal performance for a single WT may not correspond to the best solution for the entire system.}

On the other hand, recent work on UC has revealed the importance of including post-disturbance frequency dynamics within the scheduling process. More precisely, the analytical expressions for frequency nadir and Rate-of-Change-of-Frequency (RoCoF) can be explicitly included as constraints in the UC optimization problem. However, the frequency nadir term is highly nonlinear and not suitable for traditional UC formulations, often solved as a Mixed-Integer Linear Program (MILP). This has been addressed in \cite{6717054,7115982,8171772,Matthieu2019} through various forms of linearization of the respective constraint. Additionally, in \cite{Matthieu2019} the state-of-the-art converter control schemes of inverter-based generation are included in the system frequency dynamics and virtual inertia units are explicitly considered for inertia provision.  \cite{8667397} and \cite{Badesa_2019} consider the combination of a finite number of frequency response services with different delivery times and a dynamically-reduced largest power in-feed. However, due to its distinguishing characteristics, the SI provision from WTs leads to a different and more complex system dynamic evolution during a frequency event. As such, these dynamics should be explicitly modeled in the scheduling process, i.e., dispatch and unit commitment decisions, in order to optimize SI provision and achieve minimum system operation cost while maintaining the system frequency constraints. %\textcolor{black}{Not much attention has been paid to incorporating the characteristics of SI provision from WTs into optimal system scheduling due to the complexity of the highly nonlinear WT dynamics and their interaction with system frequency dynamics.}
The authors in \cite{7366764} investigate optimal frequency support by WTs from the system perspective, which however relies on dynamic simulations for different system conditions and therefore cannot be directly incorporated into the system scheduling model. The study in \cite{7370811Teng} investigates the impact of WTs and their SI provision on system costs and highlights the need for dynamic optimization of SI services with consideration of the underproduction period. Nevertheless, the detailed WT dynamics have not been considered in this work.

% Recent work on Unit Commitment (UC) in low-inertia system tackles this problem by considering post-disturbance frequency dynamics within the scheduling process. More precisely, the analytical expressions for frequency nadir and Rate-of-Change-of-Frequency (RoCoF) can be explicitly included as constraints in the UC optimization problem. However, the frequency nadir term is highly nonlinear and not suitable for traditional UC formulations, often solved as a Mixed-Integer Linear Program (MILP). This has been addressed in \cite{6717054,7115982,8171772,Matthieu2019} through various forms of linearization of the respective constraint. Additionally, in \cite{Matthieu2019} the state-of-the-art converter control schemes of inverter-based generation are included in the system frequency dynamics and virtual inertia units are explicitly considered for inertia provision. On the other hand, \cite{8667397} and \cite{8586125} consider the combination of a finite number of frequency response services with different delivery times and a dynamically-reduced largest power in-feed. 

In the vein of previous research, this paper proposes a novel system scheduling model to optimize the SI provision from WTs with the objective to minimize system operation cost, while simultaneously keeping the system frequency response within prescribed limits, as well as preserving the mechanical stability of WTs through explicit modeling of the detailed turbine dynamics. The key contributions are identified as:
\begin{enumerate}
    \item %In order to investigate the impact of WT dynamics on the generation scheduling, 
    A novel control framework is applied for SI provision from WTs, which accounts for the WT underproduction and eliminates the associated secondary frequency dip. Under such control framework, the WT underproduction can be approximated as a negative system damping term and the conservativeness of such approximation is proved.
    
    %A novel control framework is designed for SI provision from WTs, which allows for the WT underproduction to be explicitly included into the system frequency dynamics and eliminates the associated secondary frequency dip. 
    \item The system frequency constraints are derived including detailed WT dynamics under the SI control framework. The highly nonlinear, $n$-dimensional hyperboloid constraint for frequency nadir is efficiently linearized, with guaranteed conservativeness and quantified accuracy.

    \item The proposed system frequency constraints are integrated into an MILP based stochastic scheduling model to assess the impact of optimized SI provision and WT underproduction. Furthermore, the maximum SI provision from WTs in the system is estimated and constrained to ensure feasibility and mechanical stability of the wind farm.
\end{enumerate}
 
The remainder of this paper is structured as follows. Section~\ref{sec:2} describes the VSWT model and the proposed SI control framework. The system frequency dynamics and the derivation of the respective frequency constraints under the proposed control framework are presented in Section~\ref{sec:3}, together with a linearization that allows the integration of the constraint into a Stochastic UC (SUC) formulation. Section~\ref{sec:4} showcases the results and validates the performance of the proposed model on several case studies, whereas Section~\ref{sec:5} concludes the paper.
    
\section{Synthetic Inertia Control Design} \label{sec:2} 
\subsection{VSWT Modeling} \label{sec:2.1} 
A simplified wind turbine model can be represented by a mechanical model, an aerodynamic model and a Maximum Power Point Tracking (MPPT) model, while the dynamics of the electrical system can be neglected due to its shorter timescales compared to the mechanical system \cite{7393806}. The mechanical model can be approximated by the dynamics of a single rotating mass as:
\begin{equation}
    \label{TD0}
    J \dot \omega_r \omega_r = P_a - P_e .
\end{equation}
The aerodynamic power can be calculated using the following model:
\begin{equation}
    \label{P_aero}
    P_a = \underbrace{\frac{\pi}{2\cdot10^6}\rho R^2 v_w^3}_{\eta_a} C_p (\lambda, \theta) ,
\end{equation}
with $C_p(\lambda, \theta)$ denoting the power coefficient dependent on the pitch angle $\theta$ and the tip ratio $\lambda = \omega_r R/v_w$. Since the deloading control strategy is not considered in this paper, it is assumed that WTs operate in MPPT mode, thus producing maximum power corresponding to an optimal $C_p$ given the current wind speed and pitch angle. The formulation proposed in \cite{970114} is applied for calculating the power coefficient:
\begin{align}
\begin{split}
    C_p(\lambda,\theta) &= 0.22\left(\frac{116}{\lambda_i}-0.4\theta-5\right)e^{-\frac{12.5}{\lambda_i}},\\
    \frac{1}{\lambda_i} &= \frac{1}{\lambda+0.08\theta}-\frac{0.035}{\theta^3+1}. \label{cp2}
\end{split}
\end{align}

% \begin{figure}[!b]
%     \centering
%     \vspace{-0.35cm}
% 	\scalebox{0.375}{\includegraphics[]{}}
%     \caption{\label{fig:Block}Block diagram of KE extraction control scheme.}
% \end{figure}

\begin{figure}[!b]
    \centering
	\scalebox{0.735}{\includegraphics[]{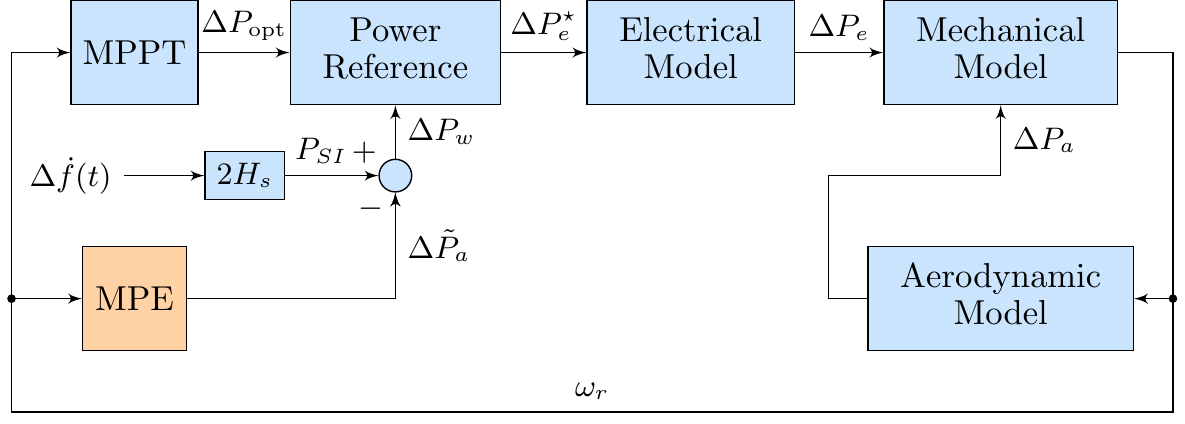}}
    \caption{\label{fig:Block}Block diagram of KE extraction control scheme.}
\end{figure}

\subsection{SI Control Scheme} \label{sec:2.2} 

% The underproduction of wind turbines can be viewed as another step disturbance to the system due to the reduction of the electrical power reference from its last value in the frequency support mode which can be larger than the pre-disturbance power to a value less than the mechanical power, typically on the MPPT curve. This problem is one of the major limitation of SI provision through KE extraction \cite{ATTYA20182071}. In general, three different sources, namely the sudden withdrawing of the active power as frequency support, the generation loss due to the deviation from the optimal operating point and the direct shift to MPPT operating mode are responsible for the secondary frequency dip whose magnitude depends on the kinetic energy having been released. (FIG)

In this subsection, a novel control framework is designed for SI provision from WTs while eliminating the secondary frequency dip. After the detection of a large disturbance, the MPPT mode is deactivated and WTs employ stored KE to inject additional active power $P_\mathrm{SI}$ proportional to the RoCoF signal $\Delta \dot{f}$ into the grid. Consequently, the rotor starts to decelerate and deviate from its optimal operating point, causing the loss of mechanical power captured from the wind. This power loss is conventionally hidden from the system during the frequency support period and tends to appear as a step disturbance afterwards leading to a secondary frequency dip, which can be problematic for a low-inertia system. Furthermore, from the system scheduling perspective, it is critical to quantify such secondary disturbances and schedule adequate resources to maintain the frequency constraints. This is very challenging within existing control frameworks due to the coupling of the mechanical and the aerodynamic models.
\begin{figure}[!t]
    \centering
	\scalebox{1.1}{\includegraphics[]{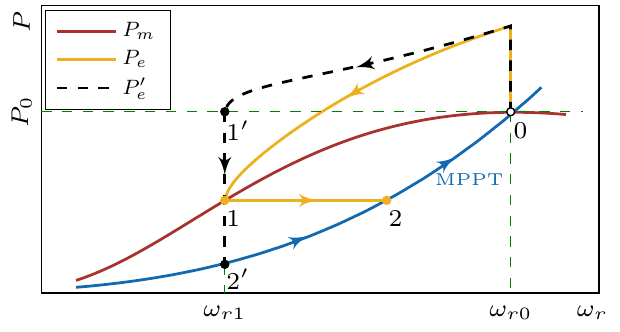}}
    \caption{\label{fig:Trajectory}Operating point trajectory of WT with SI control during a frequency disturbance event.}
\end{figure}

\begin{figure}[!t]
    \centering
	\scalebox{1.1}{\includegraphics[]{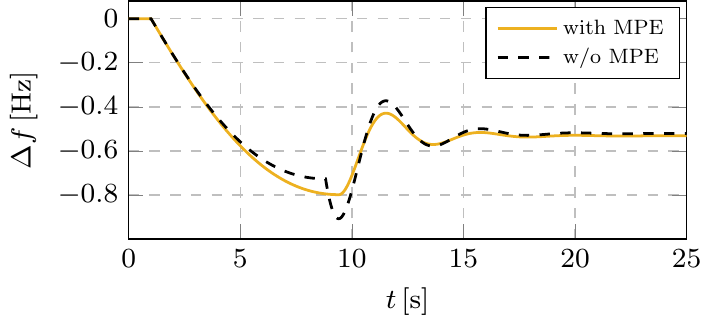}}
    \caption{\label{fig:f_MPE}System frequency evaluation under different SI control schemes.}
\end{figure}

In order to achieve the desired performance, a Mechanical Power Estimator (MPE) as first proposed in \cite{5275387} is included in the control framework, which captures the mechanical power variation according to the measured rotor speed. Its output $\Delta \tilde{P}_a$ adjusts the SI control feedback $P_\mathrm{SI}$, and combined with the MPPT setpoint $\Delta P_\mathrm{opt}$ yields the new electrical power reference $\Delta P_e^\star$. The equivalent block diagram is illustrated in Fig.~\ref{fig:Block}, where the $\Delta$ symbols denote the deviations from the values before the disturbance. In this way, the gradual decrease in mechanical power is constantly observed by the system and there is no sudden power change resulting from the termination of SI provision at the end of the KE extraction process. 

The trajectory of the electrical power reference with respect to the rotor speed is demonstrated in Fig.~\ref{fig:Trajectory}. Starting at the pre-disturbance steady-state point $(0)$, the usual overproduction operation (denoted in black) is now improved with the addition of MPE (denoted in yellow), thus resulting in the elimination of the step power change $(1'-2')$ and the associated secondary frequency dip. After the detection of the frequency nadir, the recovery process $(1-2-0)$ is enabled, characterized by a constant electrical power until the MPPT operation is restored. It should be noted that since point $(1)$ is an equilibrium (i.e., mechanical power is equal to electrical power), a small disturbance would trigger the turbine speed recovery. In such case, the acceleration power would result in longer recovery time and higher wind energy loss, which is however negligible on a timescale of primary frequency response. While further optimization to achieve a balance between additional power supply and speed of recovery during the secondary frequency response is possible as demonstrated in \cite{8064700}, it is less relevant for the problem at hand and therefore not addressed in this paper. 
% However, it can be shown that the cost associated with the underproduction period is roughly the same or even less than that of the additional FR for achieving a faster recovery.

%Furthermore, since the same SI is used, less power is injected into the system with the use of MPE due to the contribution of the decreased mechanical power. However, given the same amount of extracted KE, if a higher power output is desired for the system, it could always be achieved by increasing the synthetic inertia constant.

\textcolor{black}{To have a better understanding of the proposed SI control scheme and the influence of MPE, an example of system frequency evaluation is presented in Fig.~\ref{fig:f_MPE}. With the inclusion of MPE in the SI control, the system frequency deviation becomes slightly larger as less power is injected to the grid. However, a significant secondary frequency dip is observed in the case without MPE due to the sudden power loss after the SI provision (i.e., trajectory 1'-2' in Fig.~\ref{fig:Trajectory}), leading to a worse frequency nadir overall.}

Using the above mentioned procedure, the rotor speed deviation can be expressed analytically as follows:
\begin{equation}
    \label{TD1}
        J \dot \omega_r  \omega_r  = \Delta P_a  - \Bigg(\underbrace{\Delta \tilde{P}_a  - 2H_{s}\frac{\partial\Delta f }{\partial t}}_{\Delta P_e}\Bigg)= 2H_{s}\frac{\partial\Delta f }{\partial t}.
\end{equation}
\textcolor{black}{It is assumed that the MPE error is negligible due to several reasons, namely: (i) a reasonably accurate rotor speed measurement obtained through either a sensor-based or a sensorless technique is always available for WT control; and (ii) the aerodynamic model described by (2) is widely used in the literature with satisfactory performance~\cite{1198317,20030251,4282019}. } %Although it is difficult to solve $\omega_r$ as a function of time, 
The relationship between $\Delta f$ and $\omega_r$ can be derived by taking the integral of \eqref{TD1} on both sides:
\begin{equation}
\label{w_r}
    \omega_r(\Delta f, H_s) = \sqrt{\frac{4 H_s\Delta f}{J}+\omega^2_{r,0}},
\end{equation}
where $\omega_{r,0}$ is the initial rotor speed before the disturbance. \textcolor{black}{By substituting \eqref{w_r} into \eqref{P_aero}-\eqref{cp2}, the loss of generation due to the deviation from the maximum power point can be expressed as a function of system frequency deviation instead of WT rotor speed:}
% \begin{equation}
% \label{P_aero(w_r)}
% \begin{split}
%     \Delta \hat{P}_a =& \Delta \hat{P}_a(\Delta f, H_s)\\
%     =& \frac{0.11 \pi \rho R^2 v_w^3}{10^6}\left(\frac{116v_w}{R\omega_r(\Delta f, H_s)}-9.06\right)\\
%     & \exp{\left(0.4375-\frac{12.5v_w}{R\omega_r(\Delta f, H_s)}\right)}- \Delta \hat{P}_{a,0}.
% \end{split}
% \end{equation}

\begin{equation}
\label{P_aero(w_r)}
\begin{split}
    \Delta \Tilde{P}_a = 0.22\eta_a\left(\frac{116}{\lambda}-9.06\right)e^{0.4375-\frac{12.5}{\lambda}}-  \Tilde{P}_{a,0},
\end{split}
\end{equation}
where $\Delta f$ is subsumed in $\lambda$  and the pitch angle $\theta$ is set to zero, due to the below-rated wind speed in KE extraction control scheme. \textcolor{black}{It should be noted that pitch control is used at higher wind speeds. Under such control, a specified level of virtual inertia can be provided to the grid without any other impact on the system frequency response. Therefore, the details of the pitch control are not included in the paper.}

The proposed control framework ensures that the power injection from WTs during the frequency support period is a combination of the RoCoF control and the MPE feedback. This process continues until the frequency nadir is reached (corresponding to instantaneous RoCoF being zero).%and thereafter the system inertia becomes sufficient, with the additional SI slowing down the frequency convergence. 
At this time instance, the WT mechanical power equals the electrical power output, resulting in a constant rotor speed followed by a recovery period as discussed in Section \ref{sec:2.2}.

\subsection{MPE Approximation} \label{sec:2.3} 
The convoluted expression for $\Delta\Tilde{P}_a(t)$ described in \eqref{P_aero(w_r)} makes it hard to obtain an explicit expression for system frequency evolution upon inclusion in the frequency dynamics scheme. Therefore, a linear approximation is introduced instead, which can be interpreted as an additional damping term of the form:
\begin{equation}
    \label{Ds1}
    \Delta \hat P_a(\Delta f, H_s) = D_s(H_s)\Delta f,
\end{equation}
with the damping coefficient $D_s$ defined as
\begin{equation}
    \label{Ds2}
    D_s(H_s) = \frac{\Delta \Tilde P_a(-\Delta f_\mathrm{lim}, H_s)}{-\Delta f_\mathrm{lim}}.
\end{equation}
Due to its specific structure, the relationship in \eqref{Ds2} can be fitted with a quadratic function:
\begin{equation}
    \label{Ds3}
    D_s(H_s) = \underbrace{\frac{D_s(H_{s,\mathrm{max}})}{H^2_{s,\mathrm{max}}}}_{\gamma} H_s^2,
\end{equation}
where $H_{s,\mathrm{max}}$ denotes the maximum synthetic inertia WTs can provide for a given wind speed forecast. An example of the proposed approximation is illustrated in Fig.~\ref{fig:plots}. While the quadratic expression in \eqref{Ds3} captures the damping accurately, the approximation of $\Delta\Tilde{P}_a(\Delta f, H_s)$ has a noticeable mismatch as a consequence of the error in \eqref{Ds1}. Nonetheless, it will always be on the conservative side as proved below.

\begin{figure}[!t]
  \centering
    \begin{minipage}{0.5\textwidth}
        \centering
        \scalebox{1.1}{\includegraphics[]{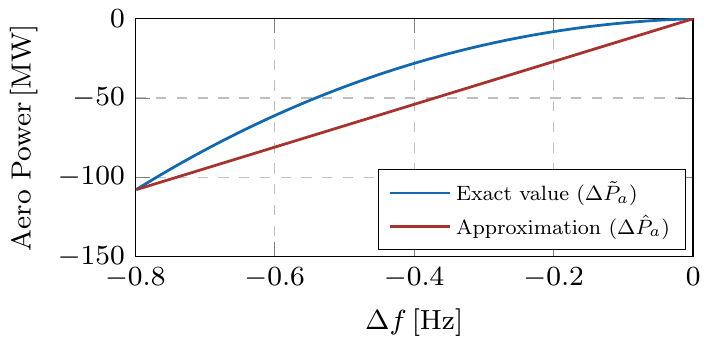}} 
        \vspace{0.15cm}
    \end{minipage} 
    \begin{minipage}{0.5\textwidth}
        \centering
        %\hspace*{-0.3cm}
        \scalebox{1.1}{\includegraphics[]{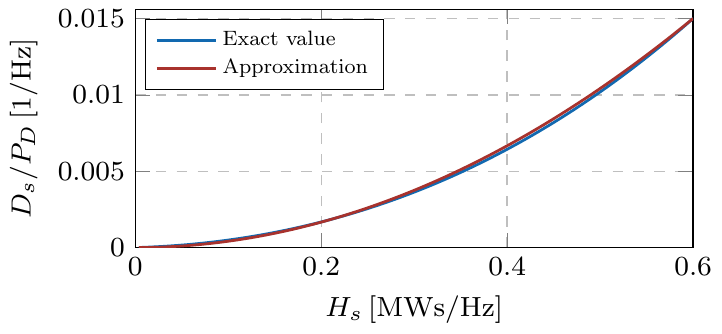}} 
    \end{minipage} 
  \caption{\label{fig:plots}Approximation of MPE output: (i) aerodynamic power $\Delta \hat{P}_a$; (ii) damping term in p.u. $D_s/P_D$.}
\end{figure}

\begin{proposition}
$\Delta \Tilde P_a(\Delta f)\ge \Delta \hat P_a(\Delta f),\forall \Delta f \in [-\Delta f_\mathrm{lim}, 0]$. The equality holds if and only if $\Delta f = -\Delta f_\mathrm{lim} $ or $\Delta f = 0 $.
\end{proposition}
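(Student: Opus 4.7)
The plan is to interpret $\Delta \hat P_a(\Delta f)$ geometrically as the secant chord of $\Delta \tilde P_a$ over $[-\Delta f_{\mathrm{lim}},0]$ and reduce the proposition to strict concavity of $\Delta \tilde P_a$ on that interval. First I would verify the endpoint matching: at $\Delta f = -\Delta f_{\mathrm{lim}}$ equality is the very definition of $D_s$ in \eqref{Ds2}; at $\Delta f = 0$ one has $\omega_r = \omega_{r,0}$, so the tip ratio coincides with its MPPT-optimal value and, by \eqref{P_aero(w_r)}, $\Delta \tilde P_a(0) = 0 = \Delta \hat P_a(0)$. Hence $\Delta \hat P_a$ is exactly the chord joining the two endpoints of $\Delta \tilde P_a$, and the claim (with strict inequality in the interior) is equivalent to strict concavity of $\Delta \tilde P_a$ as a function of $\Delta f$ on $(-\Delta f_{\mathrm{lim}},0)$.

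To establish concavity I would exploit a structural observation that follows from \eqref{w_r}: the squared tip ratio $\lambda^2 = (R^2/v_w^2)(\omega_{r,0}^2 + 4H_s\Delta f/J)$ is an affine function of $\Delta f$. Writing $g(\lambda) := \eta_a C_p(\lambda,0) - \tilde P_{a,0}$ so that $\Delta \tilde P_a = g(\lambda(\Delta f))$, differentiating $\lambda^2$ twice yields the identity $\lambda\lambda'' + (\lambda')^2 = 0$, and the chain rule then collapses to
\[
\frac{d^2 \Delta \tilde P_a}{d\Delta f^2} \;=\; (\lambda')^2\left[g''(\lambda) - \frac{g'(\lambda)}{\lambda}\right].
\]
Since $(\lambda')^2 > 0$, concavity reduces to the one-variable inequality $\lambda\, g''(\lambda) \le g'(\lambda)$ over the admissible range $\lambda \in [\lambda_{\min}, \lambda_{opt}]$ determined by $\omega_{r,\min}$ and the MPPT operating point.

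The main obstacle is verifying this last inequality, because the explicit $C_p$ in \eqref{cp2} couples a rational factor with an exponential, so closed-form expressions for $g'$ and $g''$ are algebraically heavy. At $\lambda = \lambda_{opt}$, the MPPT condition gives $g'(\lambda_{opt}) = 0$ and the maximum property forces $g''(\lambda_{opt}) < 0$, so the inequality holds strictly at the right endpoint. For smaller $\lambda$, $g'(\lambda) > 0$ since $g$ is monotonically increasing towards its interior maximum, so $g'(\lambda)/\lambda$ stays strictly positive; the task reduces to bounding $g''(\lambda)$ above by $g'(\lambda)/\lambda$ throughout the admissible range. I would attack this by substituting the closed-form derivatives from \eqref{cp2} and introducing $x = 1/\lambda$ to linearize the exponent, reducing the inequality to a tractable polynomial-exponential comparison, and, if a clean analytic argument proves elusive, confirming non-positivity numerically over the physically admissible tip-ratio window for the parameter values used in the paper.
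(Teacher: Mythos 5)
Your overall strategy is exactly the paper's: match the two endpoints (equality at $\Delta f=0$ from \eqref{w_r}--\eqref{P_aero(w_r)}, equality at $\Delta f=-\Delta f_\mathrm{lim}$ by the definition \eqref{Ds2} of $D_s$), then conclude that the chord $\Delta \hat P_a$ lies below $\Delta \tilde P_a$ by strict concavity of $\Delta \tilde P_a$ in $\Delta f$, established via the chain rule through the affine-in-$\Delta f$ quantity $\omega_r^2$ (equivalently $\lambda^2$). Your identity $\lambda\lambda''+(\lambda')^2=0$ and the resulting form $(\lambda')^2\left[g''(\lambda)-g'(\lambda)/\lambda\right]$ is the same computation the paper performs in the $\omega_r$ variable.

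The gap is that you stop at the decisive step: you never actually prove $g''(\lambda)-g'(\lambda)/\lambda<0$ on the admissible range, and you explicitly reserve the option of ``confirming non-positivity numerically,'' which would not constitute a proof of the proposition as stated. Your auxiliary observations do not close this: knowing $g'(\lambda_\mathrm{opt})=0$, $g''(\lambda_\mathrm{opt})<0$ and $g'>0$ below the maximum tells you the sign of $g'/\lambda$ but gives no upper bound on $g''$ away from $\lambda_\mathrm{opt}$, so the inequality could in principle fail in the interior. The step you fear is ``algebraically heavy'' in fact collapses: with $C_p(\lambda,0)$ from \eqref{cp2}, the positive exponential factor $e^{0.4375-12.5/\lambda}$ is common to $g''$ and $g'/\lambda$ and cancels out of the sign analysis, leaving (after clearing positive denominators) a single quadratic condition in $\lambda$, namely $188\lambda^2-8665.625\lambda+18125<0$ in the paper's normalization, which is then verified to hold over the physically admissible tip-ratio window. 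Without carrying out this explicit computation (or an equivalent rigorous bound), the concavity claim --- and hence the proposition --- remains unproven. A secondary, minor point: you need the strict version $\lambda g''(\lambda)<g'(\lambda)$ in the interior, not ``$\le$,'' to obtain the ``if and only if'' characterization of the equality cases.
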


\begin{proof}
According to \eqref{w_r}, $\Delta f= 0$ implies $\omega_r = \omega_{r,0}$ leading to:
\begin{equation}
    \label{Pa(0)}
    \Delta \Tilde P_a(0)= \Delta \hat P_a(0)=0.
\end{equation}
Equation \eqref{Ds2} results in the following:
\begin{equation}
    \label{Pa(f_lim)}
    \Delta \Tilde P_a(-\Delta f_\mathrm{lim})= \Delta \hat P_a(-\Delta f_\mathrm{lim})=-D_s \Delta f_\mathrm{lim}.
\end{equation}
The second-order derivative of $\Delta \Tilde P_a$ with respect to $\Delta f$ can be derived by applying the chain rule as follows:
\begin{equation}
\begin{split}
    \frac{\partial^2 \Delta \Tilde P_a}{\partial \Delta f^2} =& \frac{\partial^2 \Delta \Tilde P_a}{\partial \omega_r ^2}\left(\frac{\partial \omega_r}{\partial \Delta f}\right)^2 + \frac{\partial \Delta \Tilde P_a}{\partial \omega_r}\frac{\partial^2 \omega_r}{\partial \Delta f^2}\\
    =& g_1(\omega_r) g_2(\omega_r),
\end{split}
\end{equation}
where 
\begin{subequations}
\begin{align}
    g_1(\omega_r) =& \frac{4 \eta_a H_s^2 v_w^2}{J^2 R^2 \omega_r^6} e^{0.4375-\frac{12.5v_w}{R\omega_r}},\\
    g_2(\omega_r) =& \frac{18125v_w}{R\omega_r}+\frac{188 R\omega_r}{v_w}-8665.625.
\end{align}
\end{subequations}
It is straightforward to show that $g_1(\omega_r)>0$. Given the WT operating conditions, the tip ratio $\lambda = \frac{R \omega_r}{v_w}$ is kept within the $(0.21, 45.88)$ range, thus resulting in $g_2(\omega_r)<0$. Therefore, $\frac{\partial^2 \Delta \Tilde P_a}{\partial \Delta f^2}<0$, which together with \eqref{Pa(0)} and \eqref{Pa(f_lim)} concludes the proof.
\end{proof}

\subsection{Aggregated SI from Wind Farm}  \label{sec:2.4} 
During the system scheduling process, it is difficult for the system operator to consider each WT individually. Therefore, in this section an aggregated Wind Farm (WF) model is proposed where the total available synthetic inertia (termed SI capacity hereafter) and the corresponding mechanical power loss $(\Delta \hat{P}_a)$ of a WF are derived based on the wind speed distribution. Moreover, it is assumed that the WTs in a WF can be represented by an identical dynamical model and that the wind speed distribution is available to the system operator. Note that if one considers individual WTs and wind speed information of the entire system, the following discussion refers to an aggregated SI of the system.

\subsubsection{SI Capacity Estimation}\label{sec:2.4.1}
In order to ensure mechanical stability of each WT, it is necessary to maintain rotor speed above its minimum permissible level. Similarly, the SI capacity provided by a WT should be within its physical capabilities. According to the energy conservation law, the expression for SI extracted from a turbine can be derived from \eqref{w_r} as a function of the initial operating point:
\begin{equation}
\label{H_s}
    H_{s} = \frac{J\left(\omega^2_{r}-\omega^2_{r,0}\right)}{4\Delta f}.
\end{equation}
In addition, the installed converter capacity also limits the maximum inertia provision at near-rated operating points. Therefore, the maximum available SI capacity $H_{s,\mathrm{max}}$ of a single WT can be determined as follows:
\begin{equation}
\label{H_s,max}
    H_{s,\mathrm{max}} = \min\left\{ \frac{J\left(\omega^2_{r,0} - \omega^2_{r,\mathrm{min}}\right)}{4\left|\Delta f_\mathrm{lim}\right|}, \frac{P_\mathrm{max}-P_0}{2\big|\Delta \dot f_\mathrm{lim}\big|}\right\}.
\end{equation}

The initial rotor speed is regulated according to the current wind speed through MPPT control, i.e., $\omega_{r,0} = \omega_r(v_w)$. Based on the condition in \eqref{H_s,max}, the relationship between $H_{s,\mathrm{max}}$ and $v_w$ can be derived, as illustrated in Fig.~\ref{fig:Hs,m}. One may notice that the KE extraction mode is initially characterized by the increase of $H_{s,\mathrm{max}}$ with wind speed as more KE becomes available. However, the SI capacity starts to decay with the wind speed approaching the rated value and remains constant thereafter due to the limit on maximum WT power capacity. Having obtained $H_{s,\mathrm{max}}(v_w)$, the total SI capacity $H_{s_j}^C$ of a WF can be computed based on its respective wind speed distribution inside the wind farm:
\begin{equation}
\label{H^C}
    H_{s_j}^C = \int_{0}^{\infty} N_j H_{s,\mathrm{max}_j}(v_{w}) \kappa_j(v_w) dv_{w}.
\end{equation}
% \begin{equation}
% \label{H^C_new}
%     H_{s_j}^C = \sum_{k\in\mathcal{T}_j}\left(\int_{0}^{\infty} H_{s,\mathrm{max}_k}(v_{w}) \kappa_j(v_w) dv_{w}\right),
% \end{equation}

\begin{figure}[!t] 
	\centering
	%\vspace{-0.4cm}
	\scalebox{1.1}{\includegraphics[]{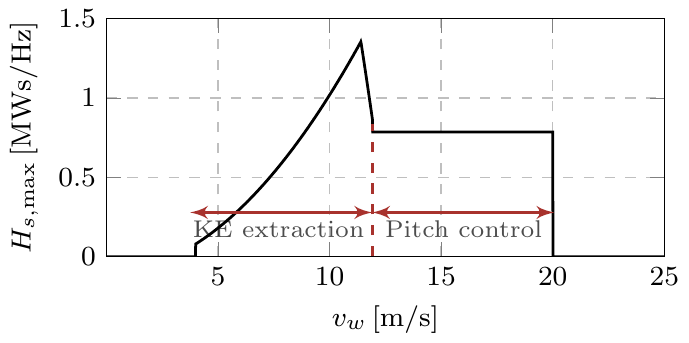}}
	\caption{Evolution of maximum SI capability of a single WT with wind speed for KE extraction and pitch control modes.}
	\label{fig:Hs,m}
\end{figure}
\subsubsection{Mechanical Power Loss}
Similarly, the total mechanical power loss $\Delta \hat P_{a_j}$ of the $j$-th WF can be calculated as:
\begin{equation}
\label{Pa_j}
    \Delta \hat P_{a_j} = \int_{0}^{\infty} N_j \Delta\hat{P}_{a_{j}}^0(v_{w}) \kappa_j(v_w) dv_{w}.
\end{equation}
Following the same procedure as in Section \ref{sec:2.3}, the aggregated mechanical power loss can be rewritten in the form of system damping:
\begin{equation}
    \label{WF_D}
    \Delta \hat P_{a_j} = D_{s_j} \Delta f = (\gamma_j H_{s_j}^2) \Delta f.
\end{equation}
Although the optimal allocation of $H_{s_j}$ to each individual WT is out of the scope of this paper, it can be shown that there always exist a realization such that the actual mechanical power loss is less than or equal to \eqref{WF_D}.

\section{System Frequency Stability Constraints with SI Provision under SUC} \label{sec:3}
\subsection{System Frequency Modeling with WT SI Control} \label{sec:3.1}
Under the premise of the Centre-of-Inertia (CoI) model, frequency dynamics in a multi-machine power system can be expressed in the form of a single swing equation \cite{7833096}:
\begin{equation}
    \label{sw1}
    2H_c\frac{\partial\Delta f(t)}{\partial t} = -D \Delta f(t) + \Delta R(t) -\Delta P_L,
\end{equation}
where $\Delta P_L$, the loss of a generator at $t=0$ can be viewed as a step disturbance. Moreover, the PFR $\Delta R(t)$ can be represented according to the following scheme \cite{6714513}:
\begin{equation}
\label{R}
\Delta R(t)=
     \begin{cases}
       \frac{R}{T_d}t &, \; 0\le t< T_d \\ 
       R &, \; T_d\le t
     \end{cases}.
\end{equation}
The total inertia of conventional generators is computed as:
\begin{equation}
    \label{H1}
    H_c =\frac{\sum_{g\in \mathcal{G}} H_g  P_g^\mathrm{max} N_g^\mathrm{up}}{f_0}.
\end{equation}

Incorporating the proposed SI control framework into \eqref{sw1} yields:
\begin{equation}
    \label{sw2}
    2H\frac{\partial\Delta f(t)}{\partial t} = -D \Delta f(t) + \Delta R(t) -\Delta P_L + {\Delta \bar{P}_a}, 
\end{equation}
with $H = H_c + \Bar{H}_s= H_c+ \sum_{j\in \mathcal{F}} H_{s_j}$ and $\Delta \bar{P}_a = \sum_{j\in \mathcal{F}} \Delta \hat P_{a_j}$ reflecting the total inertia and the mechanical power loss due to the SI provision in the system respectively. Note that $\Delta \bar{P}_a$ is negative for an under frequency event because of the formulation in \eqref{P_aero(w_r)}. 
% It should be noted that being non-positive, the term $\Delta \hat P_a$ has an adverse effect on frequency regulation. Nevertheless, this negative impact is always dominated by the SI provision, which is in turn optimized by the system scheduling process. (\textcolor{black} {not sure whether the last two sentences are necessary}) 
Combining \eqref{WF_D} and \eqref{sw2} gives:
\begin{equation}
    \label{sw3}
    2H\frac{\partial\Delta f(t)}{\partial t} = -\Big(\underbrace{D- \sum_{j\in \mathcal{F}} \gamma_j H_{s_j}^2}_{D'}\Big) \Delta f(t) + \Delta R(t) -\Delta P_L,  
\end{equation}
where the effect of WT rotor deceleration is now modeled as a negative damping, as a function of SI provided to the system. From \eqref{sw3}, the mathematical expressions for maximum instantaneous RoCoF $(\Delta \dot f_\mathrm{max}\equiv\Delta \dot f|_{t=0^+})$ and steady-state frequency deviation $(\Delta f_\mathrm{max}^\mathrm{ss}\equiv\Delta f|_{t=\infty})$ are obtained:
\begin{equation}
\label{rocof,fss}
    \Delta \dot f|_{t=0^+} = -\frac{\Delta P_L}{2H} , \; \Delta f|_{t=\infty} = \frac{R-\Delta P_L}{D'}.
\end{equation}
Based on the magnitude of disturbance $\Delta P_L$, both metrics should be kept within prescribed limits by selecting appropriate $H$ and $R$ terms, respectively. Furthermore, the time-domain solution for frequency deviation can be derived by substituting \eqref{R} into \eqref{sw3} as follows:
\begin{equation}
\label{f(t)}
    \Delta f(t) = \left(\frac{\Delta P_L}{D'}+\frac{2HR}{T_d D'^2}\right)\left(e^{-\frac{D'}{2H}t}-1\right) + \frac{R}{T_d D'}t,
\end{equation}
valid $\forall t\in [0,t_n]$. The time instance $t_n$ of frequency nadir is then determined by setting the derivative of \eqref{f(t)} to zero:
\begin{equation}
\label{tn}
    \Delta \dot f(t_n)=0 \longmapsto t_n = \frac{2H}{D'}\ln{\left(\frac{T_d D' \Delta P_L}{2HR}+1\right)}.
\end{equation}
To ensure frequency stability, the frequency nadir has to occur prior to $T_d$, i.e., $t_n\le T_d$, which combined with \eqref{f(t)}-\eqref{tn} yields the expression for frequency nadir $(\Delta f_\mathrm{max}\equiv\Delta f(t_n))$:
\begin{equation}
\label{nadir}
    \Delta f(t_n) = \frac{2HR}{T_d D'^2} \ln{\left(\frac{T_d D' \Delta P_L}{2HR}+1\right)}-\frac{\Delta P_L}{D'}.
\end{equation}
Expressions \eqref{rocof,fss} and \eqref{nadir} indicate that maximum RoCoF and frequency nadir both depend on the aggregate system inertia: former in an inversely proportional fashion and latter through a highly nonlinear function, thus highlighting the potential of synthetic inertia provision for frequency regulation.

\subsection{Approximation of Nadir Constraint}\label{sec:3.3}
After the incorporation of SI provision from WTs under the proposed control framework, the highly nonlinear frequency nadir constraint must be reformulated in order to be included in UC, which is traditionally solved as an MILP. We achieve that by utilizing the simplification from \cite{8667397}, where system damping is initially neglected and subsequently approximated by a linear term. Therefore, the original nadir constraint $\Delta f(t_n)\leq \Delta f_\mathrm{lim}$ can be first transformed into the following form:
\begin{equation}
\label{nadir_no_D}
    HR\ge \frac{\Delta P_L^2T_d}{4\Delta f_\mathrm{lim}}. 
\end{equation}
Since the function $HR=h(D')$ described by \eqref{nadir} is convex and monotonically decreasing, a linear term is included in \eqref{nadir_no_D} to compensate for the contribution of the damping:
\begin{equation}
\label{nadir_c}
    HR\ge \frac{\Delta P_L^2T_d}{4\Delta f_\mathrm{lim}}-\frac{\Delta P_L T_d }{4} D'.
\end{equation}

Note that the feasible region of \eqref{nadir_c} is in general non-convex. Therefore, the problem can not be solved with convex-optimization solvers. The nonlinearity in \eqref{nadir_c} comes from two aspects: the bilinear term $HR$ and the quadratic term $H_{s_j}^2$ in $D'$. To avoid integer variables introduced by binary expansions, a single linearization method is proposed to approximate \eqref{nadir_c} with a set of linear constraints.
%and a desired level of accuracy. 

\begin{figure}[!t] 
	\centering
	\scalebox{0.5}{\includegraphics[]{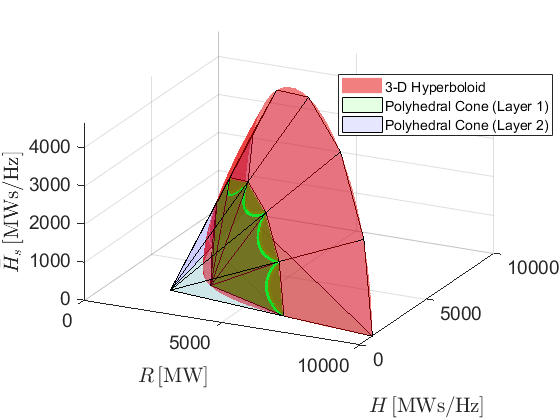}}
	\caption{Sample inner approximation of the frequency nadir constraint using a 3-D hyperboloid with $n=2$ and $m=12$.}
	\label{fig:cone}
\end{figure}

The second-order expression in \eqref{nadir_c} can be rewritten in the standard form of a multi-dimensional hyperboloid using the following linear transformation:
\begin{equation}
\label{LT}
    \begin{bmatrix}H\\R\\ \Tilde{H}_{s}  \end{bmatrix} = \begin{bmatrix}\frac{1}{\sqrt{2}} & \frac{1}{\sqrt{2}} & 0\\ \frac{1}{\sqrt{2}} & -\frac{1}{\sqrt{2}} & 0\\ 0&  0 & I\end{bmatrix} \begin{bmatrix}x_1\\x_2\\ \Tilde{x} \end{bmatrix},
\end{equation}
with $\Tilde{H}_{s} =[H_{s_1},H_{s_2},...,H_{s_{|\mathcal{F}|}}]^\mathsf{T}$, $\Tilde{x} =[x_3,x_4,...,x_{{|\mathcal{F}|}+2}]^\mathsf{T}$, and $I$ being the identity matrix. The substitution of \eqref{LT} into \eqref{nadir_c} results in
\begin{equation}
\label{hyperboloid}
    \sum_{j=1}^{|\mathcal{F}|} \frac{x_{2+j}^2}{\alpha/\beta_j}+\frac{x_2^2}{2\alpha}-\frac{x_1^2}{2\alpha}\le -1,
\end{equation}
and the following coefficients: 
\begin{equation}
    \alpha = \frac{\Delta P_L^2T_d}{4\Delta f_\mathrm{lim}}-\frac{\Delta P_L T_d D}{4},\; \beta_j = \frac{\Delta P_L T_d \gamma_j}{4}.
\end{equation}

Since $x_1\ge 0$, the feasible region of \eqref{hyperboloid} is encompassed by the volume inside the upper sheet of the hyperboloid. Due to its symmetry, the hyperboloid can be inner approximated by $n\cdot m$ hyperplanes where $n,m\in\mathbb{Z}^+$ are the number of parallel layers in the $R-H$ plane and the number of evenly distributed hyperplanes in each layer, respectively. As a result, \eqref{nadir_c} can be approximated by $n\cdot m$ linear constraints of the form:
\begin{equation}
\label{nadir_L}
    a_i H+b_i R+\Tilde{c}_i \Tilde{H}_s+d_i \le 0, \; \forall i\in\mathcal{P},
\end{equation}
with $\mathcal{P}$ being the set of planes (i.e., $|\mathcal P|=n\cdot m$) and $\Tilde{c}_i=[c_{i,1},c_{i,2},...,c_{i,|\mathcal{F}|}]$. Furthermore, it can be proven that the set in \eqref{nadir_c} is convex given $H_{s_j}\ge 0, \forall j \in \mathcal{F}$. Therefore, there exist $\{a_i,b_i,\Tilde c_i,d_i\}$ such that the feasible region of \eqref{nadir_L} is always a subset of that of \eqref{nadir_c}, $\forall i\in \mathcal P$ and $n,m >0$, which guarantees the conservatism of the proposed linear approximation.

An illustrative example of the aforementioned linearization is given in Fig.~\ref{fig:cone}, with a simplified 3-D hyperboloid corresponding to $|\mathcal{F}|=1$ and the original feasible set being replaced with the intersection of $n=2$ polyhedral cones, each described by $m=12$ planes. Moreover, since $\Tilde H_s\ge 0$, only half of the planes are considered in each layer. It should be noted that, in order to achieve better approximation, the density of the layers increases as they approach the vertex of the hyperboloid due to larger curvature. Although any desired level of accuracy could be achieved by increasing the number of planes, the optimal $n$ and $m$ should be chosen to achieve a trade-off between the conservativeness of the model and the associated computational burden.

\subsection{Stochastic Unit Commitment}
\label{sec:3.2}
For the purposes of this paper, a stochastic UC model previously developed in \cite{7833096} is adopted. The renewable energy source uncertainty and generation outages are described by constructing an appropriate scenario tree, whereas the SUC problem minimizes the expected cost over all nodes in the given scenario tree:
\begin{equation}
    \label{eq:SUC}
    \min \sum_{n\in \mathcal{N}} \sum_{g\in \mathcal{G}} \pi (n) C_g(n) 
\end{equation}
A number of constraints are considered such as those of power balance, thermal and storage unit operation as well as system frequency security. \textcolor{black}{Note that for the RoCoF constraint, the maximum instantaneous RoCoF as defined in \eqref{rocof,fss} is considered.} More details regarding the model and SUC formulation can be found in \cite{7833096}. 

\section{Results} \label{sec:4}
Several case studies are conducted based on a GB 2030 power system to demonstrate the effectiveness of the proposed method. The system parameters are set as follows: load demand $P_D \in [20,60] \times10^3\,\mathrm{MW}$, damping $D = 0.5\% P_D / 1\,\mathrm{Hz}$, FR delivery time $T_d = 10\,\mathrm{s}$ and maximum power loss $\Delta P_L = 1800\,\mathrm{MW}$. More details regarding the system description and characteristics of the thermal plants can be found in \cite{8667397}. The frequency limits set by National Grid are: $\Delta f_\mathrm{lim} = 0.8\,\mathrm{Hz}$, $\Delta f_\mathrm{lim}^\mathrm{ss} = 0.5\,\mathrm{Hz}$ and $\Delta \dot f_\mathrm{lim} = 0.5\,\mathrm{Hz/s}$. The annual system operation is simulated under SUC with frequency constraints, which is solved by FICO Xpress through C++ application via BCL \cite{BCL}.

\begin{figure}[!t] 
	\centering
	%\vspace{-0.4cm}
	\scalebox{0.83}{\includegraphics[]{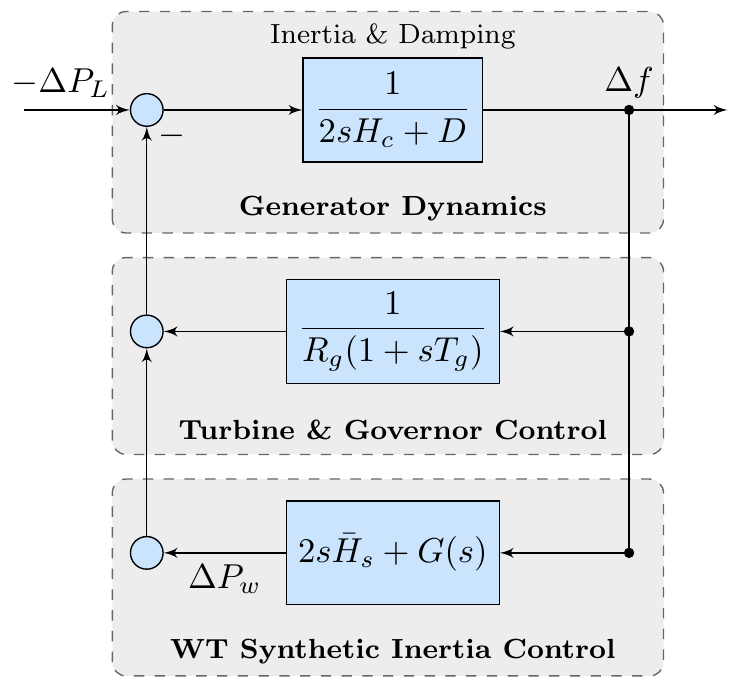}}
	\caption{System frequency dynamics with SI from WTs.}
	\label{fig:sys_dyn}
\end{figure}

% \begin{figure}[!b] 
% 	\centering
% 	\vspace{-0.35cm}
% 	\scalebox{1.2}{\includegraphics[]{}}
% 	\caption{System frequency evaluation after loss of generation at $t=1s$.}
% 	\label{fig:f}
% 	%\vspace{-0.35cm}
% \end{figure}

% \begin{figure}[!b] 
% 	\centering
% 	\vspace{-0.35cm}
% 	\scalebox{1.2}{\includegraphics[]{}}
% 	\caption{Active power injection from WT after loss of generation at $t=1s$.}
% 	\label{fig:Psi}
% 	%\vspace{-0.35cm}
% \end{figure}

\begin{figure}[!b] 
    \centering
    \scalebox{1.1}{\includegraphics[]{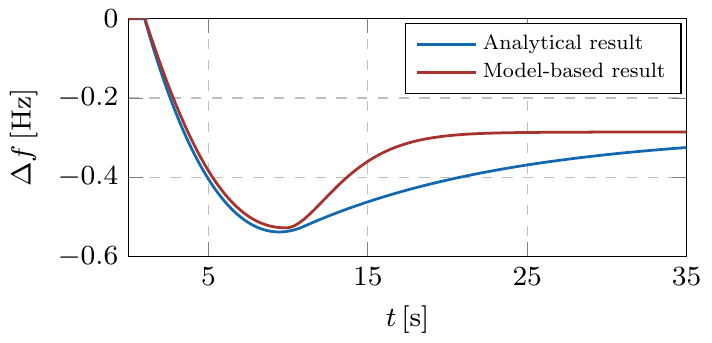}}
    \caption{Frequency comparison of analytical and model-based results.}
    \label{fig:f_compare}
    % \vspace{-0.35cm}
\end{figure}

\begin{figure}[!t]
    \centering
    % \vspace{-0.35cm}
	\scalebox{1.1}{\includegraphics[]{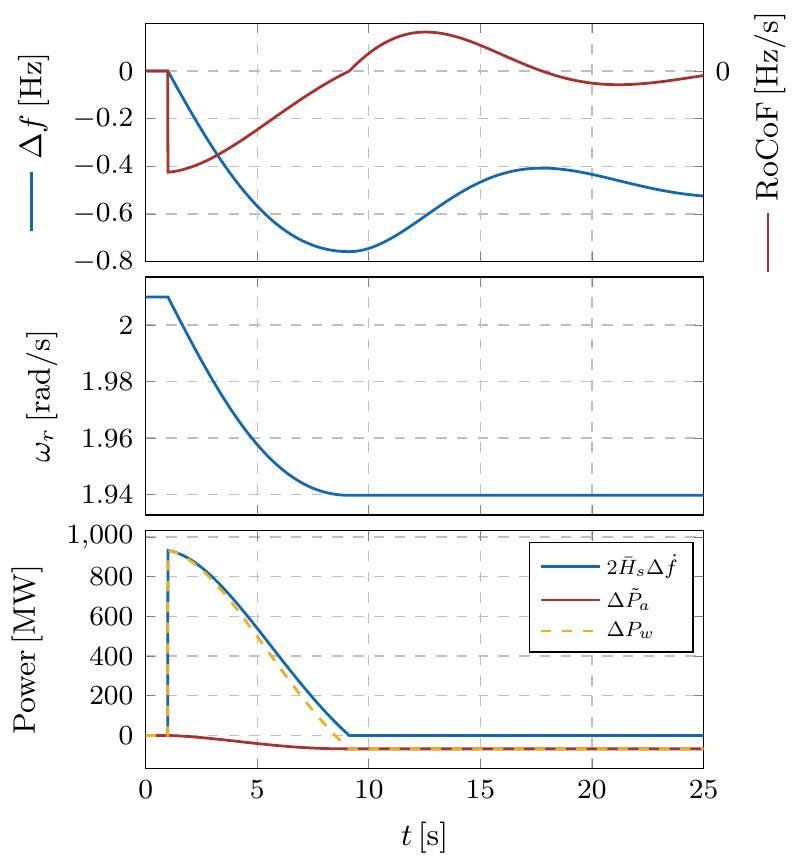}}
	\caption{System response after loss of generation: (i) frequency deviation and RoCoF; (ii) rotor speed; (iii) WT power injection.}
	\label{fig:f&Psi}
\end{figure}

\subsection{Validation of Proposed Frequency \textcolor{black}{Dynamics and} Constraints}\label{sec:4.1}
The accuracy of the derived analytical expressions for the frequency \textcolor{black}{dynamics and} constraints is validated in this section through dynamical simulations using MATLAB/Simulink. The simplified model of system frequency dynamics is illustrated in Fig.~\ref{fig:sys_dyn} where a droop gain $R_g$ and a low-pass filter with time constant $T_g$ represent the dynamics of governor control. The WT SI control is based on the model proposed in Section \ref{sec:2}, with the nonlinear transfer function $G(s) = \Delta \Tilde P_{a}/\Delta f$. 

\textcolor{black}{The modified frequency dynamics described by \eqref{rocof,fss}-\eqref{nadir} are evaluated by comparing the analytical expressions to the model-based results, where the nonlinearities such as synchronous generator dynamics, governor control and wind turbine dynamics are also considered. An accurate approximation of both maximum RoCoF and frequency nadir is observed in Fig.~\ref{fig:f_compare}. Note that the analytical approximation \eqref{f(t)} is valid for the time period until the instance of frequency nadir. Therefore, a noticeable mismatch in the frequency response is present thereafter.}
 
In order to assess the precision of the frequency constraints, \textcolor{black}{a sample solution of the SUC model is provided to the dynamical model resulting in the evolution of CoI frequency and RoCoF depicted in Fig.~\ref{fig:f&Psi}-(i), with the frequency nadir of $0.758\,\mathrm{Hz}$ and maximum RoCoF of $-0.17\,\mathrm{Hz/s}$ being within prescribed limits. A discontinuity is observed at the instance of frequency nadir due to decay of SI provision, resulting in faster frequency convergence and larger RoCoF. The rotor speed illustrated in Fig.~\ref{fig:f&Psi}-(ii) gradually decreases until the frequency reaches its minimum, and subsequently stays constant since the recovery period is not considered.} Moreover, the total WT power injection to the grid ($\Delta P_w$) is presented in Fig.~\ref{fig:f&Psi}-(iii) with its two components - the inertial response and the mechanical power loss - indicated in blue and red respectively. Although the total power injection is less than that of an inertial response alone due to mechanical power loss, the output power reduction is negligible within the first few seconds after the disturbance when SI is most valuable. 

Furthermore, the approximations proposed in Section \ref{sec:2.3} and Section \ref{sec:3.2} for deriving the nadir constraint in \eqref{nadir_L} are assessed and the conservativeness is quantified. This is achieved by obtaining 2952 samples of the linearized nadir expression from the daily SUC dispatch of the GB's 2030 system (corresponding to one month of each season), with system parameters kept within the following ranges: $R\in [1.66,2.71]\,\mathrm{GW}$, $H_s\in [0,3.14]\,\mathrm{GWs/Hz}$ and $H\in [3.71,5.73]\,\mathrm{GWs/Hz}$. For comparison, the actual frequency nadir of each sample is also obtained through dynamic simulations. The results shown in Table \ref{tab:nadir} indicate that the inclusion of frequency constraint \eqref{nadir_L} ensures that the frequency nadir is always maintained below the $\Delta f_\mathrm{lim}$ limit, and the increased number of cones and planes leads to less conservative approximation. However, with the increase in the number of hyperplanes used in the linearization the improvement in the accuracy of the nadir constraint approximation becomes less significant. On the other hand, the computational time grows significantly and might affect the tractability of the optimization problem.  The approximation with $n=4$ cones and $m = 12$ planes has a conservativeness of $0.036\,\mathrm{Hz}$ (i.e., $4.5\,\%$) on average and a reasonable computational time, thus being applied in the remainder of this study.

\begin{table}[!t]
\renewcommand{\arraystretch}{1.2}
\caption{Frequency Nadir Assessment}
\label{tab:nadir}
\noindent
\centering
    \begin{minipage}{\linewidth} %Use the minipage environment to footnote tables
    \renewcommand\footnoterule{\vspace*{-5pt}} %to remove the horizontal rule above the table footnote
    \begin{center}
        \begin{tabular}{ c | c | c | c | c }
            \toprule
             \multirow{2}{5em}{\textbf{Linearization}\\\textbf{Parameters}} &\multicolumn{3}{c|}{\textbf{Nadir} $\boldsymbol{[\mathrm{Hz}]}$}  & \multirow{2}{7.5em}{\textbf{Computational}\\\textbf{Time (Increase)}} \\ 
            \cline{2-4}
            & \textbf{Min}  & \textbf{Mean}  & \textbf{Max}  & \\ 
            \cline{1-5} 
            $n=2, m=6$ & $0.689$ & $0.712$ & $0.742$ & $65.3\,\mathrm{s}\,(0\,\%)$ \\
            \cline{1-5} 
            $n=4, m=12$ & $0.741$ & $0.764$ & $0.791$ & $94.2\,\mathrm{s}\,(44.3\,\%)$  \\
            \cline{1-5} 
            $n=8, m=24$ & $0.741$ & $0.781$ & $0.797$ & $369.3\,\mathrm{s}\,(465.5\,\%)$ \\
           \bottomrule
        \end{tabular}
        \end{center}
    \end{minipage}
\end{table} 

\subsection{Impact of SI Provision on System Operation}\label{sec:4.2}
\begin{figure}[!t] 
	\centering
	\scalebox{1.1}{\includegraphics[]{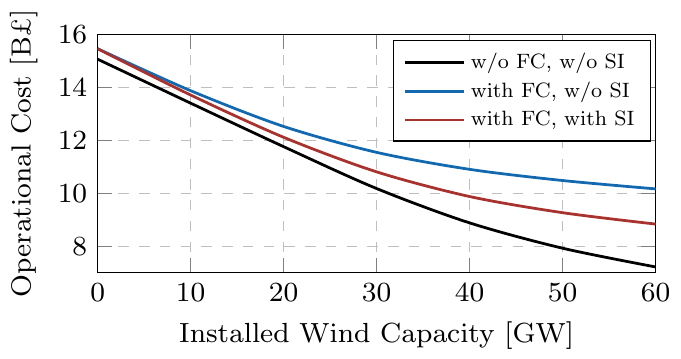}}
	\caption{Impact of frequency constraints and SI provision on the system operation cost.}
	\label{fig:SI_Value}
\end{figure}

% \begin{figure}[!t] 
% 	\centering
% 	%\vspace{-0.4cm}
% 	\scalebox{1.1}{\includegraphics[]{}}
% 	\caption{Three-day SUC dispatch of synchronous generators, wind power production and SI provision.}
% 	\label{fig:UC_dispatch}
% 	\vspace{-0.35cm}
% \end{figure}

\begin{figure}[!t]
  \centering
    \begin{minipage}{0.5\textwidth}
        \centering
        \hspace*{-1.26cm}
        \scalebox{1.1}{\includegraphics[]{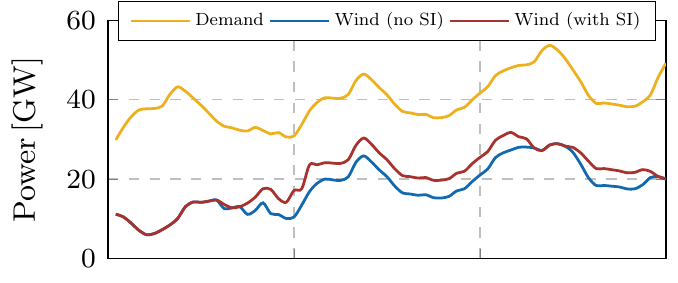}} 
    \end{minipage} 
    \begin{minipage}{0.5\textwidth}
        \centering
        \hspace*{-0.3cm}
        \scalebox{1.1}{\includegraphics[]{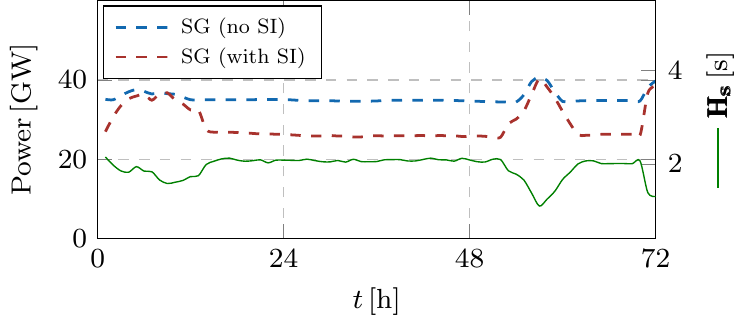}} 
    \end{minipage} 
  \caption{\label{fig:UC_dispatch}Three-day SUC dispatch of (i) demand and wind power production; (ii) synchronous generators and SI provision.}
\end{figure}

This section evaluates the influence of SI provision from WTs on system operation. Three different scenarios are considered: (1) without (w/o) Frequency Constraints (FC) and w/o SI from WTs; (2) with FC and w/o SI; (3) with FC and with SI, with the difference in operational costs summarized in Fig.~\ref{fig:SI_Value} for different levels of installed wind capacity. In the case without frequency constraints (Scenario 1), approximately linear reduction of the operational costs is observed with increase in WT penetration since more energy is supplied by wind generation. With inclusion of FC (Scenario 2) the reduction in operational costs tends to saturate for installed wind capacity above $40\,\mathrm{GW}$ due to an increased number of partially-loaded synchronous generators providing frequency support. Furthermore, the cost of fulfilling frequency requirements - the difference between Scenarios 1 and 2 - increases more than seven times as wind capacity grows from $0$ to $60\,\mathrm{GW}$. By additionally providing SI from WTs (Scenario 3) significant cost savings can be achieved, particularly under high penetration of wind generation, indicated by the cost difference between Scenarios 2 and 3.

The load and wind power profiles as well as aggregated online capacity of Synchronous Generation (SG) for scenarios with and without SI are depicted in Fig.~\ref{fig:UC_dispatch} for the period of three days and installed wind capacity of $60\,\mathrm{GW}$. With SI provision from WTs, the online SG capacity is reduced by $9\,\mathrm{GW}$ in the periods $t\in[14,55]\,\mathrm{h}$ and $t\in [62,70]\,\mathrm{h}$ when there is an abundant wind resource. This is justified by the fact that an increasing amount of SI from WTs reduces the FR requirement from SGs. Similarly, more wind power ($\approx 4\,\mathrm{GW}$), which is curtailed in no SI case to ensure sufficient system inertia, can be utilized under such conditions, as indicated by the difference between the solid red and blue curves. Additionally, the total equivalent SI time constant from WTs \textcolor{black}{as defined in \eqref{H_time_C}},  varying in the range of $[1.1-2.2]\,\mathrm{s}$ is also illustrated in Fig.~\ref{fig:UC_dispatch}, where one notices its inverse relationship with SG capacity. In particular, during the time of low net demand (i.e., $t\in[14,52]\,\mathrm{h}$ and $t\in [64,70]\,\mathrm{h}$) a significant amount of SI is scheduled from WTs and vice versa.

\subsection{Impact of SI Penetration Level and WT Underproduction} \label{sec:4.3}
The influence of SI penetration (i.e., the percentage of WTs with SI capability) and the underproduction of WTs on system operation cost is presented here. As illustrated in Fig.~\ref{fig:Rec}, the cost saving increases considerably for SI penetration of up to $40\,\%$, whereas higher penetration levels do not have a significant contribution to operational costs as total available SI in the system exceeds the necessary capacity. This suggests that it might not be necessary for all WTs in the network to have SI capability. Furthermore, neglecting WT underproduction due to a loss of efficiency would result in an underestimation of system operation costs of up to $170\,\mathrm{M}\text{\pounds}$. With an increasing number of WTs equipped with SI control at high SI penetration levels the SI service provided by an individual WT becomes negligible, thus leading to smaller rotor speed deviation and output shortage. As a result, the impact of overproduction for WT systems with high SI capabilities is significantly reduced. Nevertheless, not taking WT underproduction into consideration during system scheduling process increases the risk of frequency constraint violation and mechanical instability of WTs. This is verified by feeding the results of the SUC model without the WT underproduction into the dynamical model previously described in Section~\ref{sec:4.1}. It is observed that in approximately $70\,\%$ of time instances (2077 out of 2952 hours) a frequency nadir exceeds the limit of $0.8\,\mathrm{Hz}$, $82\,\%$ of which lead to the WT rotor speed falling below the minimum permissible value.

\begin{figure}[!t] 
	\centering
	%\vspace{-0.35cm}
	\scalebox{1.1}{\includegraphics[]{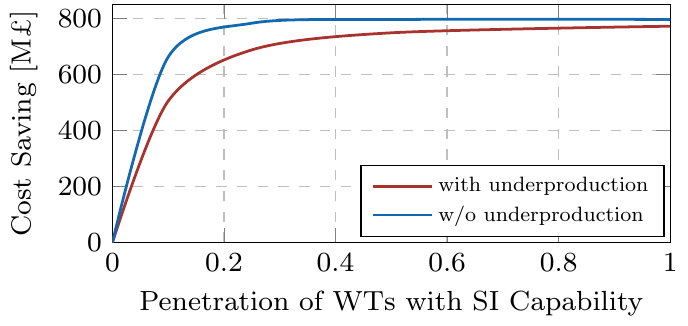}}
	\caption{System Operation Cost Saving under Different SI Penetrations.}
	\label{fig:Rec}
\end{figure}

\subsection{Value of Dynamically Optimizing SI Provision}\label{sec:4.4}

\begin{figure}[!t] 
	\centering
	\scalebox{1.1}{\includegraphics[]{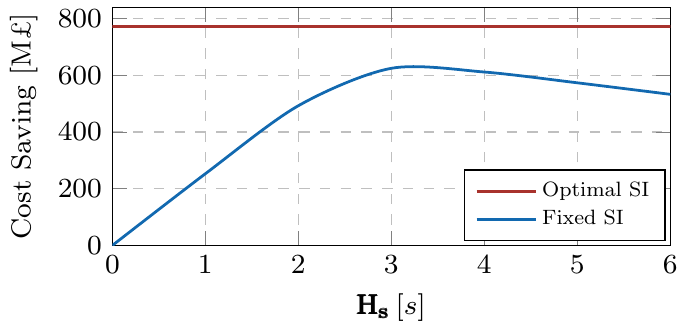}}
	\caption{System Operation Cost Saving with different SI Provision strategies.}
	\label{fig:fix}
\end{figure}

In order to investigate the benefit of optimal SI provision, a fixed value of SI time constant $\pmb{\mathrm{H_s}}$ for all WTs at each hour of the system scheduling horizon is considered as a reference. It can be computed from the total SI $\bar{H}_s$ as:
\begin{equation}
    \label{H_time_C}
    \pmb{\mathrm{H_s}} = \frac{\bar{H}_s f_0}{\sum_{j \in \mathcal{F}}P_{wj}^{c}}.
\end{equation}
It should be noted that for those hours when the maximum system SI is below the fixed value, the maximum SI is applied to avoid infeasibility of the SUC model. In addition, the WT underproduction is also considered for both optimal and fixed SI scenarios in order to demonstrate the cost increase due to excessive SI levels in the system. 

The SI penetration level is set to $50\,\%$ for all scenarios and the impact of the SI provision strategy on operational cost saving is illustrated in Fig.~\ref{fig:fix}. Understandably, the savings pertaining to optimal SI provision are not a function of the fixed SI time constant, as the equivalent WT inertia constant in such case varies from $0$ to $4.16\,\mathrm{s}$ with an average of $2.56\,\mathrm{s}$. 
%and only the result of optimal SI with underproduction is plotted due to the closeness to that without as discussed in Section \ref{sec:4.3}.
On the other hand, the system cost saving under fixed SI control increases with higher SI time constants of up to $3.2\,\mathrm{s}$, after which point the benefit of additional SI is diminished by the cost of managing the WT underproduction. 
%This is justified by the fix SI no underproduction case where the cost saving continues increasing as more SI is provided.
As a result, the maximum cost saving is roughly $25\,\%$ lower compared to the optimal SI provision case, thus highlighting the value of optimizing SI gains under the proposed control design.

\section{Conclusion and Future Work}    \label{sec:5}
This paper proposes a novel control framework for SI provision from WTs. The mechanical power loss due to the rotor speed deviation is fed back to the electric power reference through MPE such that the secondary frequency dip is eliminated while the effectiveness of frequency regulation is maintained. 
%Based on the wind speed distribution, the total available SI in the system is estimated thus enabling the SI control being included in the SUC problem. 
Analytical expressions of relevant system frequency metrics are derived and included as constraints in the SUC model. Moreover,  a set of linear constraints is proposed to approximate the highly nonlinear nadir constraint with a reasonable accuracy and conservativeness. The benefits of optimal SI provision from WTs are clearly identified through several case studies on GB's future power system. The results also suggest that the WT underproduction has a significant impact on frequency security of the system as well as the mechanical stability of the turbine itself, and therefore cannot be neglected during the SI design and system scheduling process.

Future work will extend the proposed model in several directions. In particular, the optimal SI allocation of each WF should be considered, with model predictive control being a potential method for realization of the real-time optimization and implementation. One of the main objectives may involve minimizing the total wind energy loss and the mechanical stress while providing required SI to the system. Furthermore, the CoI-based frequency model should be re-evaluated, i.e., local generator frequencies should be taken into consideration for the purposes of optimal SI control design and system scheduling.

% References section
\bibliographystyle{IEEEtran}
\bibliography{bibliography}

% Generated by IEEEtran.bst, version: 1.14 (2015/08/26)
\begin{thebibliography}{10}
\providecommand{\url}[1]{#1}
\csname url@samestyle\endcsname
\providecommand{\newblock}{\relax}
\providecommand{\bibinfo}[2]{#2}
\providecommand{\BIBentrySTDinterwordspacing}{\spaceskip=0pt\relax}
\providecommand{\BIBentryALTinterwordstretchfactor}{4}
\providecommand{\BIBentryALTinterwordspacing}{\spaceskip=\fontdimen2\font plus
\BIBentryALTinterwordstretchfactor\fontdimen3\font minus
  \fontdimen4\font\relax}
\providecommand{\BIBforeignlanguage}[2]{{%
\expandafter\ifx\csname l@#1\endcsname\relax
\typeout{** WARNING: IEEEtran.bst: No hyphenation pattern has been}%
\typeout{** loaded for the language `#1'. Using the pattern for}%
\typeout{** the default language instead.}%
\else
\language=\csname l@#1\endcsname
\fi
#2}}
\providecommand{\BIBdecl}{\relax}
\BIBdecl

\bibitem{zero-c}
nationalgridESO, ``Zero carbon operation 2025,'' Tech. Rep., April 2019.

\bibitem{10-year}
nationalgridESO, ``Electricity ten year statement 2016 - uk electricity
  transmission,'' Tech. Rep., November 2016.

\bibitem{8450880}
F.~Milano, F.~D{\"o}rfler, G.~Hug, D.~J. Hill, and G.~Verbi{\v{c}},
  ``Foundations and challenges of low-inertia systems,'' in \emph{PSCC}, 2018.

\bibitem{FES}
\BIBentryALTinterwordspacing
``{UK Future Energy Scenarios},'' 2018, accessed: 2019-05-15. [Online].
  Available:
  \url{http://fes.nationalgrid.com/media/1363/fes-interactive-version-final.pdf}
\BIBentrySTDinterwordspacing

\bibitem{8830447}
E.~{Rakhshani}, D.~{Gusain}, V.~{Sewdien}, J.~L. {Rueda Torres}, and M.~A.
  M.~M. {Van Der Meijden}, ``A key performance indicator to assess the
  frequency stability of wind generation dominated power system,'' \emph{IEEE
  Access}, vol.~7, pp. 130\,957--130\,969, 2019.

\bibitem{DREIDY2017144}
M.~Dreidy, H.~Mokhlis, and S.~Mekhilef, ``Inertia response and frequency
  control techniques for renewable energy sources: A review,'' \emph{Renewable
  and Sustainable Energy Reviews}, vol.~69, pp. 144 -- 155, 2017.

\bibitem{8064700}
K.~Liu, Y.~Qu, H.~Kim, and H.~Song, ``Avoiding frequency second dip in power
  unreserved control during wind power rotational speed recovery,'' \emph{IEEE
  Trans. Power Syst.}, vol.~33, no.~3, pp. 3097--3106, May 2018.

\bibitem{KARBOUJ2019488}
H.~Karbouj, Z.~H. Rather, D.~Flynn, and H.~W. Qazi, ``Non-synchronous fast
  frequency reserves in renewable energy integrated power systems: A critical
  review,'' \emph{International Journal of Electrical Power \& Energy Systems},
  vol. 106, pp. 488 -- 501, 2019.

\bibitem{ATTYA20182071}
A.~Attya, J.~Dominguez-Garcia, and O.~Anaya-Lara, ``A review on frequency
  support provision by wind power plants: Current and future challenges,''
  \emph{Renewable and Sustainable Energy Reviews}, vol.~81, pp. 2071 -- 2087,
  2018.

\bibitem{HAFIZ2015629}
F.~Hafiz and A.~Abdennour, ``Optimal use of kinetic energy for the inertial
  support from variable speed wind turbines,'' \emph{Renewable Energy},
  vol.~80, pp. 629 -- 643, 2015.

\bibitem{6038914}
S.~E. Itani, U.~D. Annakkage, and G.~Joos, ``Short-term frequency support
  utilizing inertial response of dfig wind turbines,'' in \emph{2011 IEEE Power
  and Energy Society General Meeting}, July 2011, pp. 1--8.

\bibitem{6423240}
L.~{Wu} and D.~G. {Infield}, ``Towards an assessment of power system frequency
  support from wind plant—modeling aggregate inertial response,'' \emph{IEEE
  Trans. Power Syst.}, vol.~28, no.~3, pp. 2283--2291, Aug 2013.

\bibitem{7017580}
L.~{Miao}, J.~{Wen}, H.~{Xie}, C.~{Yue}, and W.~{Lee}, ``Coordinated control
  strategy of wind turbine generator and energy storage equipment for frequency
  support,'' \emph{IEEE Trans. Ind. Appl.}, vol.~51, no.~4, pp. 2732--2742,
  July 2015.

\bibitem{7797125}
E.~{Rakhshani} and P.~{Rodriguez}, ``Inertia emulation in ac/dc interconnected
  power systems using derivative technique considering frequency measurement
  effects,'' \emph{IEEE Transactions on Power Syst.}, vol.~32, no.~5, pp.
  3338--3351, Sep. 2017.

\bibitem{app7070654}
U.~Tamrakar, D.~Shrestha, M.~Maharjan, B.~P. Bhattarai, T.~M. Hansen, and
  R.~Tonkoski, ``Virtual inertia: Current trends and future directions,''
  \emph{Applied Sciences}, vol.~7, no.~7, pp. 1--29, 2017.

\bibitem{8611073}
T.~{Kerdphol}, F.~S. {Rahman}, M.~{Watanabe}, Y.~{Mitani}, D.~{Turschner}, and
  H.~{Beck}, ``Enhanced virtual inertia control based on derivative technique
  to emulate simultaneous inertia and damping properties for microgrid
  frequency regulation,'' \emph{IEEE Access}, vol.~7, pp. 14\,422--14\,433,
  2019.

\bibitem{8468053}
J.~{Fang}, R.~{Zhang}, H.~{Li}, and Y.~{Tang}, ``Frequency derivative-based
  inertia enhancement by grid-connected power converters with a
  frequency-locked-loop,'' \emph{IEEE Transactions on Smart Grid}, vol.~10,
  no.~5, pp. 4918--4927, Sep. 2019.

\bibitem{6717054}
H.~{Ahmadi} and H.~{Ghasemi}, ``Security-constrained unit commitment with
  linearized system frequency limit constraints,'' \emph{IEEE Trans. Power
  Syst.}, vol.~29, no.~4, pp. 1536--1545, July 2014.

\bibitem{7115982}
F.~{Teng}, V.~{Trovato}, and G.~{Strbac}, ``Stochastic scheduling with
  inertia-dependent fast frequency response requirements,'' \emph{IEEE Trans.
  Power Syst.}, vol.~31, no.~2, pp. 1557--1566, March 2016.

\bibitem{8171772}
A.~{Muzhikyan}, T.~{Mezher}, and A.~M. {Farid}, ``Power system enterprise
  control with inertial response procurement,'' \emph{IEEE Trans. Power Syst.},
  vol.~33, no.~4, pp. 3735--3744, July 2018.

\bibitem{Matthieu2019}
M.~{Paturet}, U.~{Markovic}, S.~{Delikaraoglou}, E.~{Vrettos}, P.~{Aristidou},
  and G.~{Hug}, ``Stochastic unit commitment in low-inertia grids,'' \emph{IEEE
  Trans. Power Syst.}, 2020.

\bibitem{8667397}
L.~{Badesa}, F.~{Teng}, and G.~{Strbac}, ``Simultaneous scheduling of multiple
  frequency services in stochastic unit commitment,'' \emph{IEEE Trans. Power
  Syst.}, vol.~34, no.~5, pp. 3858--3868, Sep. 2019.

\bibitem{Badesa_2019}
L.~Badesa, F.~Teng, and G.~Strbac, ``Pricing inertia and frequency response
  with diverse dynamics in a mixed-integer second-order cone programming
  formulation,'' \emph{Applied Energy}, vol. 260, pp. 1--11, 2020.

\bibitem{7366764}
L.~Ruttledge and D.~Flynn, ``Emulated inertial response from wind turbines:
  Gain scheduling and resource coordination,'' \emph{IEEE Trans. Power Syst.},
  vol.~31, no.~5, pp. 3747--3755, Sep. 2016.

\bibitem{7370811Teng}
F.~Teng and G.~Strbac, ``Assessment of the role and value of frequency response
  support from wind plants,'' \emph{IEEE Trans. Sustainable Enery}, vol.~7,
  no.~2, pp. 586--595, April 2016.

\bibitem{7393806}
A.~D. Paola, D.~Angeli, and G.~Strbac, ``Scheduling of wind farms for optimal
  frequency response and energy recovery,'' \emph{IEEE Trans. Control Syst.
  Technol.}, vol.~24, no.~5, pp. 1764--1778, Sep. 2016.

\bibitem{970114}
J.~G. Slootweg, H.~Polinder, and W.~L. Kling, ``Dynamic modelling of a wind
  turbine with doubly fed induction generator,'' in \emph{2001 Power
  Engineering Society Summer Meeting. Conference Proceedings (Cat.
  No.01CH37262)}, vol.~1, July 2001, pp. 644--649 vol.1.

\bibitem{5275387}
G.~C. Tarnowski, P.~C. Kjar, P.~E. Sorensen, and J.~Ostergaard, ``Variable
  speed wind turbines capability for temporary over-production,'' in \emph{2009
  IEEE Power Energy Society General Meeting}, July 2009, pp. 1--7.

\bibitem{1198317}
J.~B. {Ekanayake}, L.~{Holdsworth}, {XueGuang Wu}, and N.~{Jenkins}, ``Dynamic
  modeling of doubly fed induction generator wind turbines,'' \emph{IEEE Trans.
  Power Syst.}, vol.~18, no.~2, pp. 803--809, May 2003.

\bibitem{20030251}
\BIBentryALTinterwordspacing
L.~Holdsworth, ``\BIBforeignlanguage{English}{Comparison of fixed speed and
  doubly-fed induction wind turbines during power system disturbances},''
  \emph{\BIBforeignlanguage{English}{IEE Proceedings - Generation, Transmission
  and Distribution}}, vol. 150, pp. 343--352(9), May 2003. [Online]. Available:
  \url{https://digital-library.theiet.org/content/journals/10.1049/ip-gtd_20030251}
\BIBentrySTDinterwordspacing

\bibitem{4282019}
R.~G. {de Almeida} and J.~A. {Pecas Lopes}, ``Participation of doubly fed
  induction wind generators in system frequency regulation,'' \emph{IEEE Trans.
  Power Syst.}, vol.~22, no.~3, pp. 944--950, Aug 2007.

\bibitem{7833096}
F.~Teng and G.~Strbac, ``Full stochastic scheduling for low-carbon electricity
  systems,'' \emph{IEEE Trans. Autom. Sci. Eng.}, vol.~14, no.~2, pp. 461--470,
  April 2017.

\bibitem{6714513}
H.~{Chávez}, R.~{Baldick}, and S.~{Sharma}, ``Governor rate-constrained opf
  for primary frequency control adequacy,'' \emph{IEEE Trans. Power Syst.},
  vol.~29, no.~3, pp. 1473--1480, May 2014.

\bibitem{BCL}
\BIBentryALTinterwordspacing
``{FICO Xpress BCL},'' 2017, accessed: 2019-09-01. [Online]. Available:
  \url{https://www.msi-jp.com/xpress/learning/square/17-bcl.pdf}
\BIBentrySTDinterwordspacing

\end{thebibliography}

% Biography section
\begin{IEEEbiography}[{\includegraphics[trim={.1cm .5cm .45cm .2cm}, width=1in,clip,keepaspectratio]{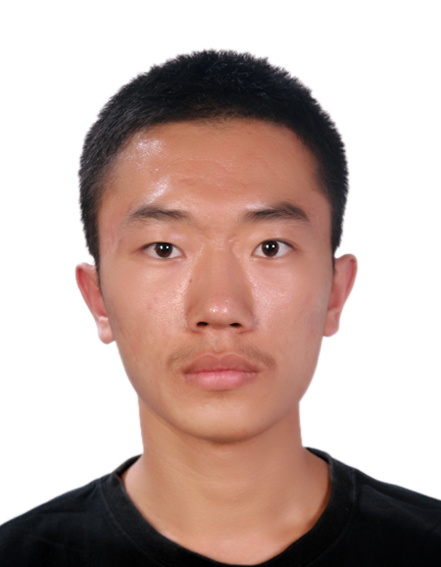}}]{Zhongda Chu}
(S'18) received the M.Sc. degree in electrical engineering and information technology from Swiss Federal Institute of Technology, Zurich, Switzerland, in 2018. He is currently working towards the Ph.D. degree in the Department of Electrical and Electronic Engineering, Imperial College London, London, UK. His research interests include control and optimization of low inertia power systems.
\end{IEEEbiography}
\begin{IEEEbiography}[{\includegraphics[trim={7cm 20cm 20cm 5cm}, width=1in,clip,keepaspectratio]{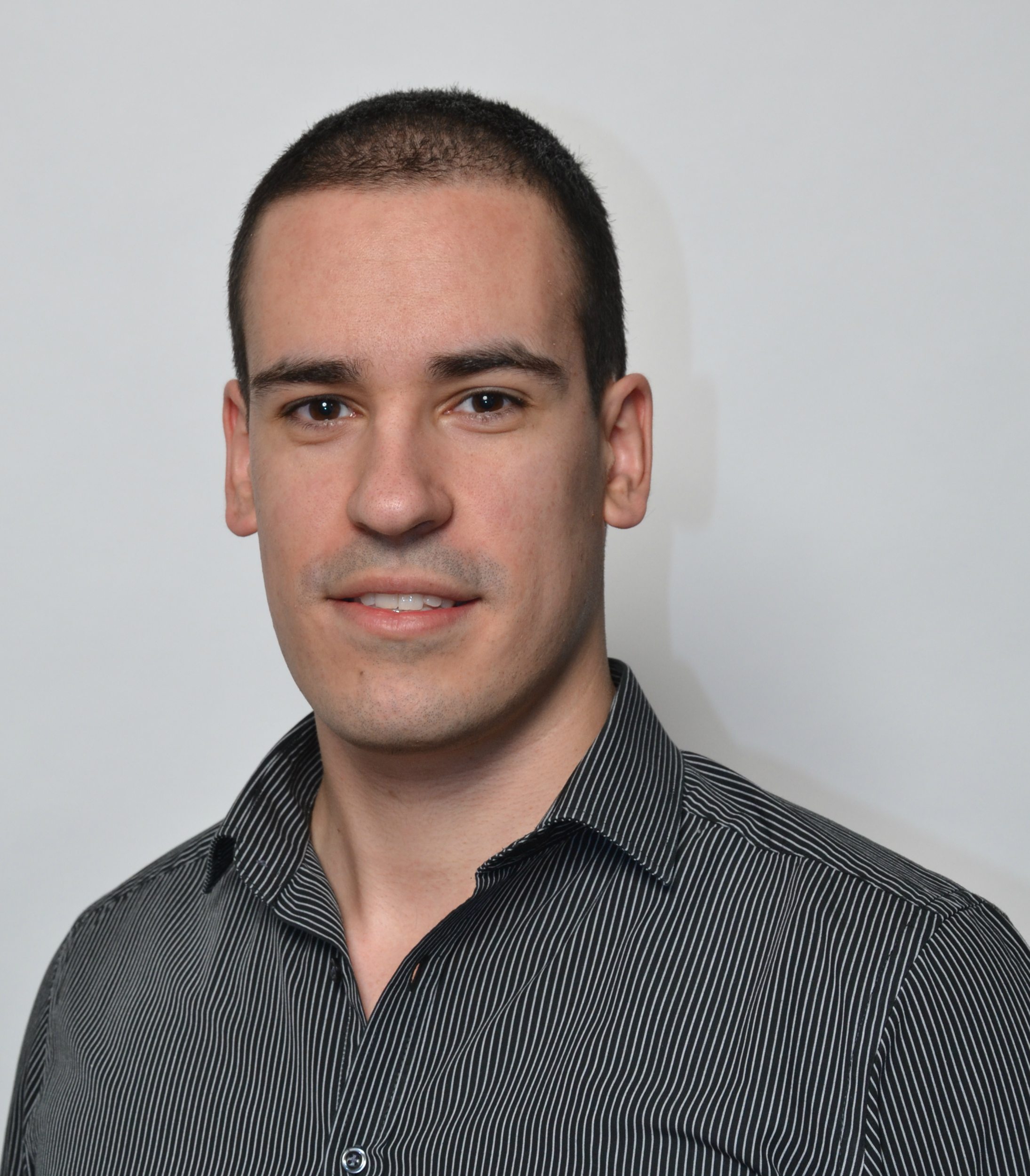}}]{Uros Markovic}
(S'16-M'20) was born in Belgrade, Serbia. He received the Dipl.-Eng. degree in Electrical Engineering from the University of Belgrade, Serbia, in 2013, with a major in power systems. He obtained the M.Sc. and Ph.D. degrees in Electrical Engineering and Information Technology in 2016 and 2020, respectively, both from the Swiss Federal Institute of Technology (ETH), Zurich, Switzerland. He is currently a joint Postdoctoral researcher with the Power Systems Laboratory and the Automatic Control Laboratory of ETH Zurich, Switzerland, and an affiliated researcher at the Grid Integration Group (GIG) of Lawrence Berkeley National Laboratory (LBNL), California, USA. 

His research interests include power system dynamics, control and optimization, with a focus on stability and operation of inverter-dominated power systems with low rotational inertia.
\end{IEEEbiography}
\begin{IEEEbiography}[{\includegraphics[trim={5.5cm 15cm 6cm 4.25cm}, width=1in,clip,keepaspectratio]{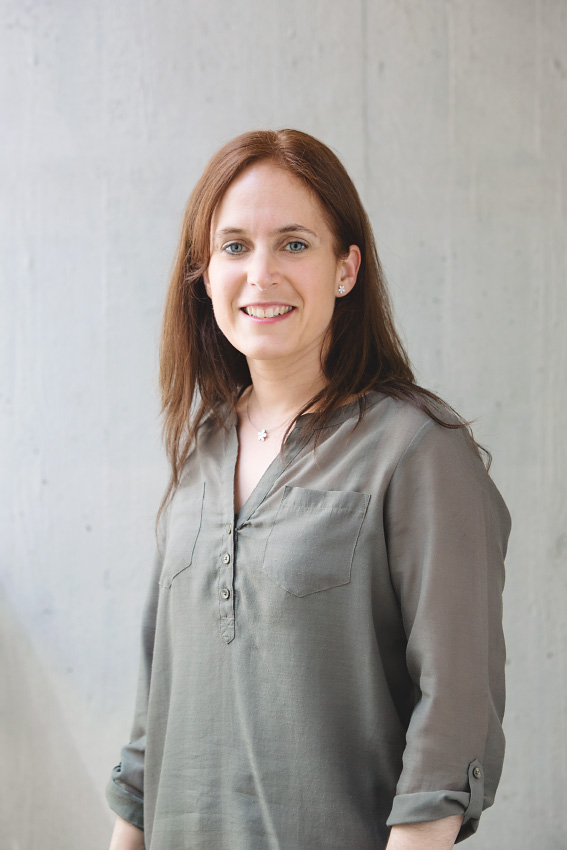}}]{Gabriela Hug}
(S'05-M'08-SM'14) was born in Baden, Switzerland. She received the M.Sc. degree in electrical engineering in 2004 and the Ph.D. degree in 2008, both from the Swiss Federal Institute of Technology (ETH), Zurich, Switzerland. After the Ph.D. degree, she worked in the Special Studies Group of Hydro One, Toronto, ON, Canada, and from 2009 to 2015, she was an Assistant Professor in Carnegie Mellon University, Pittsburgh, PA, USA. She is currently an Associate Professor in the Power Systems Laboratory, ETH Zurich. Her research is dedicated to control and optimization of electric power systems.
\end{IEEEbiography}
\begin{IEEEbiography}[{\includegraphics[trim={.1cm 1.2cm .1cm .1cm}, width=1in,clip,keepaspectratio]{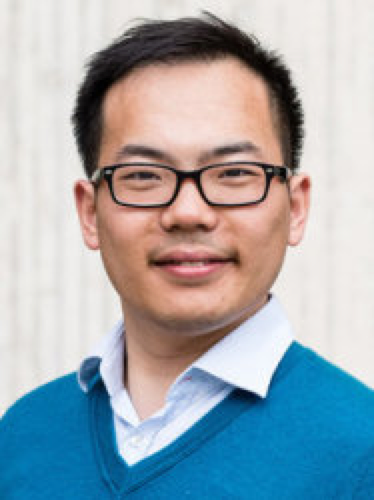}}]{Fei Teng}
(M'15) received the BEng in Electrical Engineering from Beihang University, China, in 2009, and Ph.D. degree in Electrical Engineering from Imperial College London, U.K., in 2015. Currently, he is a Lecturer in the Department of Electrical and Electronic Engineering, Imperial College London, U.K. His research focuses on scheduling and market design for low-inertia power system, cyber-resilient energy system operation and control, and objective-based data analytics for future energy systems.
\end{IEEEbiography}
\vfill

% Can be used to pull up biographies so that the bottom of the last one
% is flush with the other column.
%\enlargethispage{-50mm}

% That's all folks
\end{document}

% --- supplement: figures/SI_Value.tex ---

\begin{varwidth}{\linewidth}

\begin{tikzpicture}
\begin{axis}[
    scaled ticks=false,
    tick label style={/pgf/number format/fixed},
    colormap name=viridis,
    width=7.25cm,
    height=4cm,
    %trim axis left,
    %trim axis right,
    %xlabel style={yshift=0.75em},
    %ylabel style={yshift=-1.5em},
    xlabel={Installed Wind Capacity  [GW]},
    ylabel={Operational Cost [B\pounds]},
    xmin=0, xmax=60,
    ymin=7, ymax=16,
    xtick={0,10,20,30,40,50,60},
    %ytick={0.9,0.95,1,1.05,1.1},
    %legend pos=north east,
    xmajorgrids=true,
    ymajorgrids=true,
    legend style={at={(axis cs:59.5,15.8)},anchor=north east,nodes={scale=0.75, transform shape}},
    legend cell align={left},
    grid style=dashed,
]
\footnotesize
\addplot[
    smooth,
    thick,
    color=pBlack,
    %fill=black, 
    %fill opacity=0.15
    ]
    table {data/SI_Value/data0.txt};    
    \addlegendentry{\footnotesize w/o FC, w/o SI}       
    
\addplot[
    smooth,
    thick,
    color=pBlue,
    %fill=black, 
    %fill opacity=0.15
    ]
    table {data/SI_Value/data1.txt};        
    \addlegendentry{\footnotesize with FC, w/o SI}  
    
\addplot[
    smooth,
    thick,
    color=pRed,
    %fill=black, 
    %fill opacity=0.15
    ]
    table {data/SI_Value/data2.txt};        
    \addlegendentry{\footnotesize with FC, with SI}  

\end{axis}

\end{tikzpicture} 

\end{varwidth}